\newcommand{\squishlist}{\begin{itemize}[itemsep=1pt,parsep=2pt,topsep=3pt,partopsep=0pt,leftmargin=0em, itemindent=1em,labelwidth=1em,labelsep=0.5em]}
\newcommand{\squishend}{\end{itemize}}
\newcommand{\squishenum}{\begin{enumerate}[itemsep=1pt,parsep=2pt,topsep=3pt,partopsep=0pt,leftmargin=0em,listparindent=1.5em,labelwidth=1em,labelsep=0.5em]}
\newcommand{\squishsubenum}{\begin{enumerate}[itemsep=1pt,parsep=2pt,topsep=0pt,partopsep=0pt,leftmargin=0em,listparindent=1.5em,labelwidth=1em,labelsep=0.5em]}
\newcommand{\squishenumend}{\end{enumerate}}
\newcommand{\red}[1]{#1}
\newcommand{\name}{MilliSonic}
\begin{document}
\title{MilliSonic: Pushing the Limits of Acoustic Motion Tracking}

\author{Anran Wang and Shyamnath Gollakota}
\affiliation{%
  \institution{University of Washington}
}
\email{{anranw, gshyam}@cs.washington.edu}


\begin{abstract}
 Recent years have seen interest in device tracking and localization using acoustic signals. State-of-the-art acoustic motion tracking systems however do not achieve millimeter accuracy and require large separation between microphones and speakers, and as a result, do not meet the requirements for many VR/AR applications. Further, tracking multiple concurrent acoustic transmissions from VR devices today requires sacrificing accuracy or frame rate. We present MilliSonic, a novel system that pushes the limits of acoustic based motion tracking. Our core contribution is a novel localization algorithm that can provably achieve sub-millimeter 1D  tracking accuracy in the presence of multipath, while using only a single beacon with a small   4-microphone array. Further, \name\ enables concurrent tracking of upto four smartphones without reducing frame rate or accuracy.   Our evaluation shows that \name\ achieves 0.7mm median 1D  accuracy and a 2.6mm median 3D  accuracy for smartphones, which is  5x more accurate than state-of-the-art systems.  MilliSonic enables two previously infeasible interaction applications: a) 3D tracking of VR headsets  using the smartphone as a beacon and b) fine-grained 3D tracking  for the Google Cardboard VR system  using a small microphone array. 
\end{abstract}

\begin{CCSXML}
<ccs2012>
<concept>
<concept_id>10003120.10003121.10003128.10011754</concept_id>
<concept_desc>Human-centered computing~Pointing</concept_desc>
<concept_significance>500</concept_significance>
</concept>
<concept>
<concept_id>10003120.10003138.10003140</concept_id>
<concept_desc>Human-centered computing~Ubiquitous and mobile computing systems and tools</concept_desc>
<concept_significance>500</concept_significance>
</concept>
<concept>
<concept_id>10010520.10010553.10010562.10010564</concept_id>
<concept_desc>Computer systems organization~Embedded software</concept_desc>
<concept_significance>300</concept_significance>
</concept>
</ccs2012>
\end{CCSXML}

\ccsdesc[500]{Human-centered computing~Pointing}
\ccsdesc[500]{Human-centered computing~Ubiquitous and mobile computing systems and tools}


\maketitle

\section{Introduction}

\begin{figure*}
\vskip -0.05in
    \centering
    \begin{subfigure}[b]{0.25\textwidth}
        \includegraphics[width=0.8\textwidth]{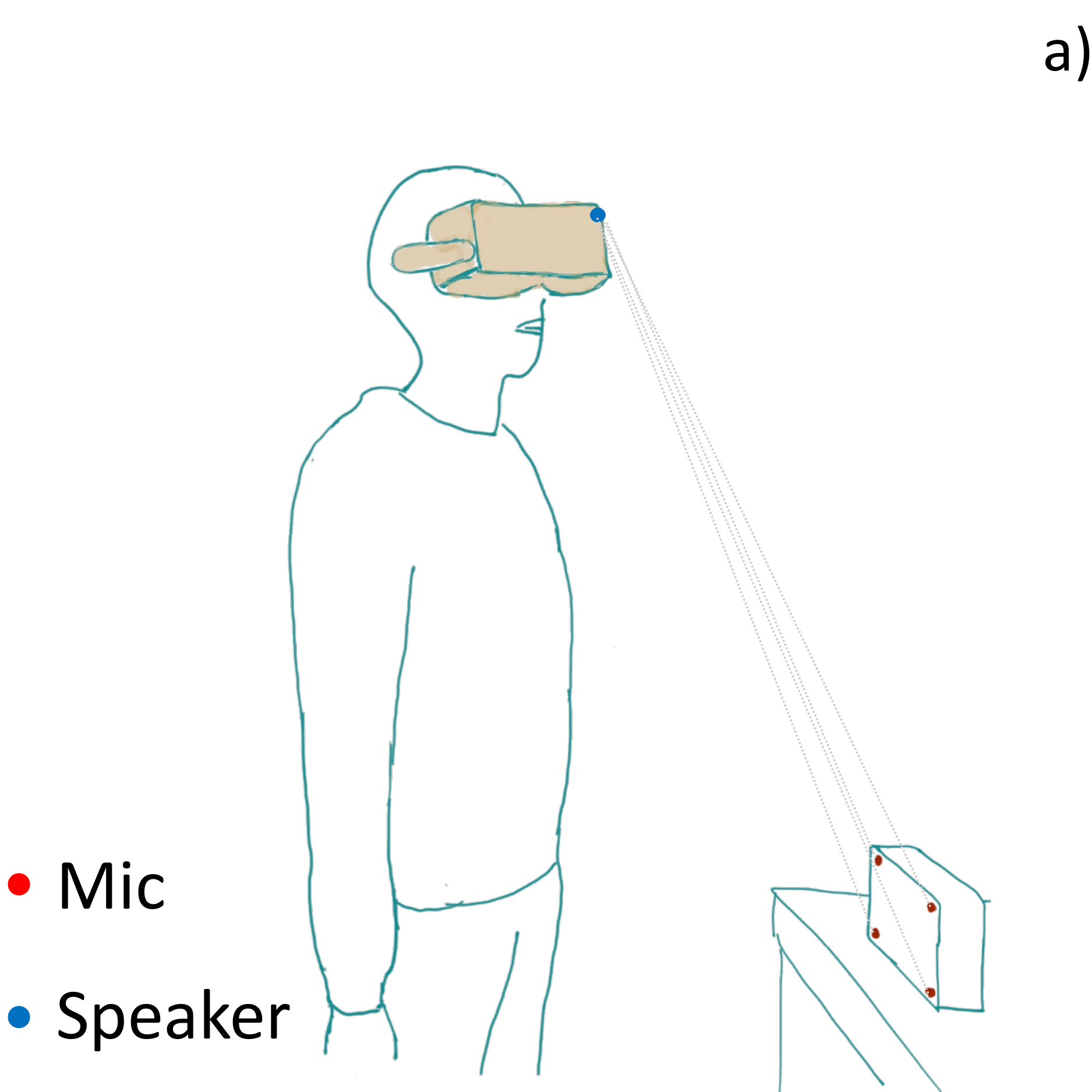}
    \end{subfigure}
    \hspace{3em}
    \begin{subfigure}[b]{0.25\textwidth}
        \includegraphics[width=0.8\linewidth]{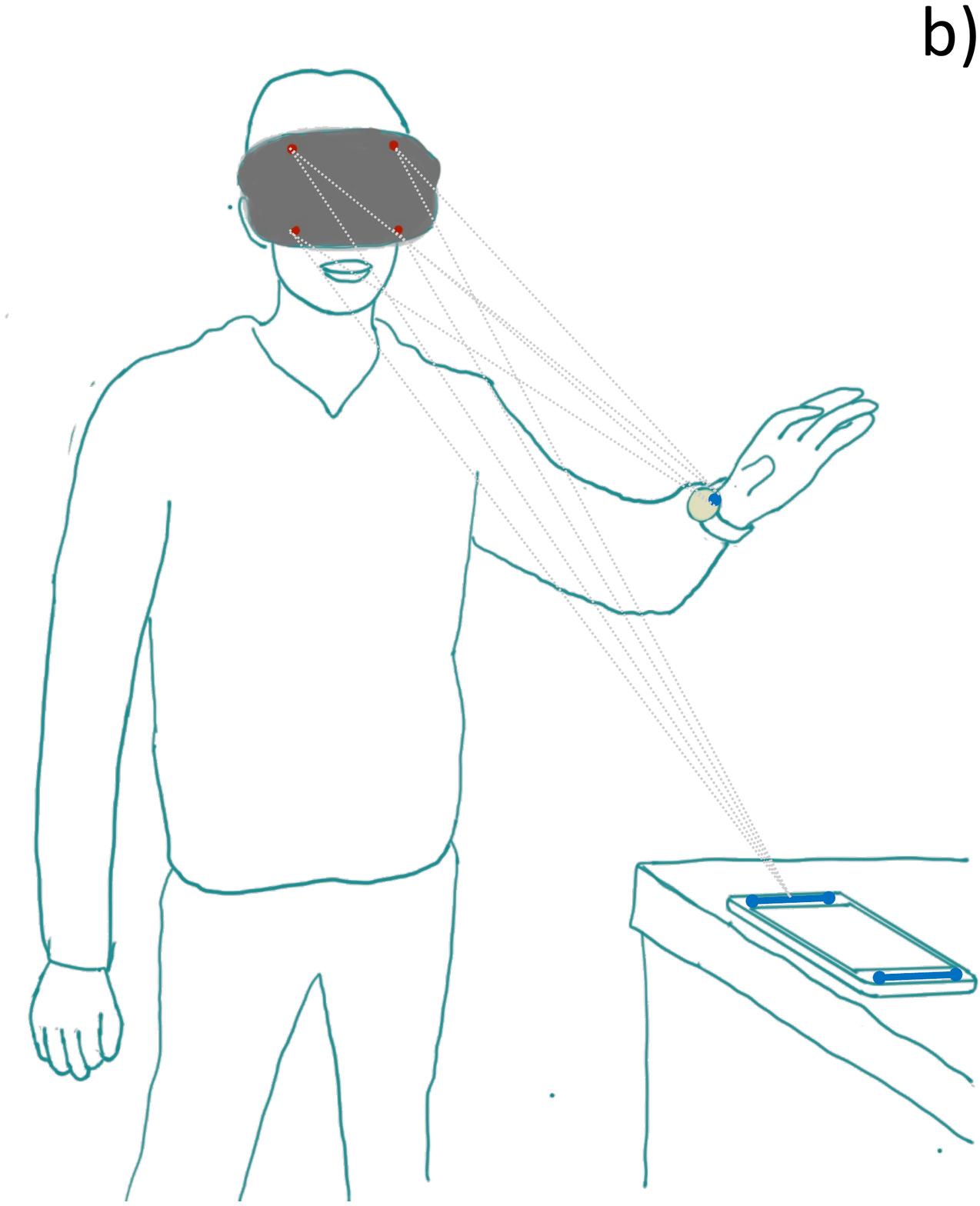}
    \end{subfigure}
    \hspace{3em}
    \begin{subfigure}[b]{0.25\textwidth}
        \includegraphics[width=0.8\linewidth]{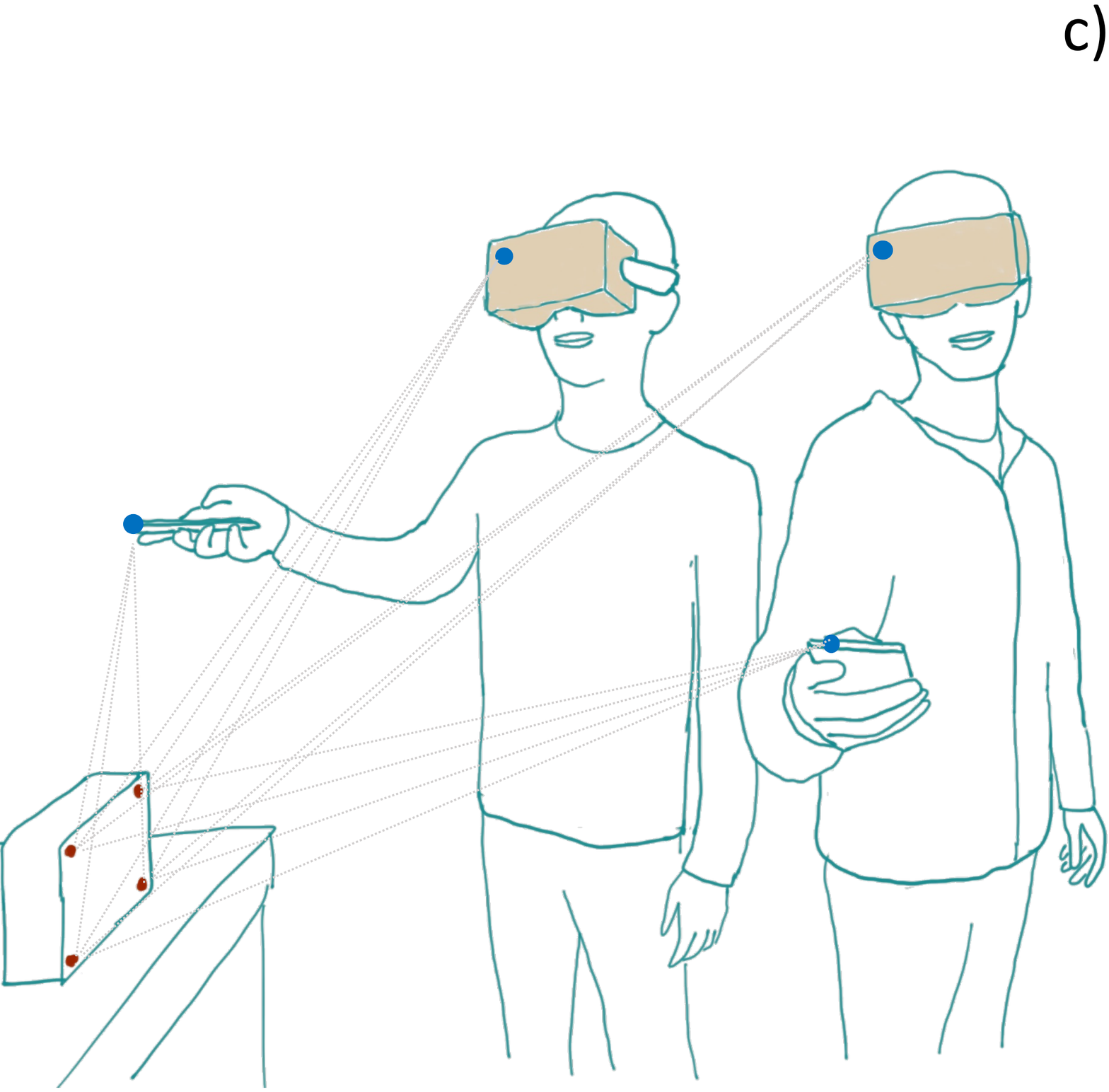}
    \end{subfigure}
    \vskip -0.15in
    \caption{Application scenarios: a) tracking a Google cardboard VR using a small microphone array; b) Tracking the 3D position of VR/AR headsets using a smartphone as a beacon. Using the transmissions from the smartwatch to then track it w.r.t. the headset; c) Concurrently tracking multiple devices with a single microphone array at a high per-device frame rate.}
    \label{fig:illustration}
\end{figure*}


Device localization and motion tracking has been a  long-standing challenge in the research community. It is a key component in  Virtual Reality  and Augmented/Mixed Reality applications and enables novel human-computer interactions including gesture and skeletal tracking. Traditionally, specialized optical methods  such as lasers and infrared beacons have been used to localize  VR headsets and controllers. This includes commercial systems like the HTC Vive VR, Oculus Rift and Sony PlayStation VR~\cite{htcvive, oculusrift, playstationmove}. These optical tracking solutions, however, require separate \red{expensive} beacons to emit infrared signals and  transceivers to receive and process data. Existing devices like smartphones lack  these transceivers and hence are unsuitable for such techniques.

Acoustic-based localization and tracking methods have recently emerged as an attractive alternative to optical systems~\cite{peng2007beepbeep, zhang2012swordfight}.  Speakers and microphones, used for emitting and receiving acoustic signals, are cheap and easy to configure. Furthermore, commodity smartphones and smart watches already have built-in speakers and microphones, which makes acoustic tracking an excellent fit for such devices. As shown in Fig.~\ref{fig:illustration}(a), a simple microphone array could act as a beacon to enable 3D location tracking for the Google cardboard VR system. Conversely, instead of carrying around additional devices (e.g., HTC IR beacons) to enable tracking for VR headsets, one could  reuse existing smartphones as beacons to enable 3D localization and motion tracking.

State-of-the-art acoustic motion tracking systems~\cite{mao2016cat, zhang2017soundtrak} however do not adequately meet the requirements of VR/AR applications for three main reasons.
\squishlist
\item {\it Tracking accuracy.} Acoustic signals suffer from multi-path where the signal reflects off nearby surfaces before arriving at the receiver. Thus, existing 1D acoustic tracking accuracy is 5-10~mm~\cite{mao2016cat}, which is much worse than optical systems \red{and may cause motion sickness with prolonged use~\cite{Akiduki2003VisualvestibularCI}}.  

\item {\it Microphone/speaker separation.} 3D tracking requires triangulation  from multiple microphones/speakers, which when placed close to each other limits accuracy. Prior work that tracks smartphones uses multiple speakers separated by 90~cm~\cite{mao2016cat}, making them difficult to integrate  into VR/AR headsets. Conversely, using a 90~cm beacon for Google cardboard VR  is unwieldy and limits portability.
\item {\it Concurrency.} Tracking multiple headsets remains a challenge with existing designs. A na\"{\i}ve approach is to time multiplex the acoustic signals from each device. This however reduces  the frame rate linearly with the number of devices.
\squishend

We present \name, a novel system that pushes the limits of acoustic based motion tracking. Our core contribution is a novel localization algorithm that can achieve sub-millimeter 1D tracking accuracy in the presence of multipath, while using only a single beacon with a 4-microphone array.  To achieve this, like prior designs~\cite{mao2016cat, erdelyi2018sonoloc}, \name\ uses FMCW (frequency modulated continuous wave) acoustic transmissions where the frequency linearly increases with time.  Prior designs use FMCW to separate reflections arriving at different times by mapping time differences to frequency shifts. However, given the limited inaudible bandwidth on smartphones, the ability to differentiate between close-by paths using frequency shifts is limited, thus, limiting accuracy. Our algorithm instead leverages the phase of the FMCW reflections to perform tracking. We prove that this allows us to achieve sub-millimeter 1D tracking. These high 1D accuracies allow us to reduce the separation between microphones at the beacon and achieve millimeter-resolution 3D tracking and localization. Finally, we show that by have devices intentionally introduce different time delays to their FMCW signals, we can support concurrent acoustic transmissions from multiple devices, without reducing the accuracy or frame rate for each device. 

 We implement our design using speakers on Android smartphones including Samsung Galaxy S6, S7 and S9. We design $15cm\times 15cm$ and {$6cm\times 5.35cm$} 4-microphone arrays using commercial microphones and implement our real-time tracking algorithms on a Raspberry Pi 3 Model B+~\cite{raspberrypi}.  
 
This paper makes the following contributions.
\squishlist
  \item We show for the first time how to achieve sub-mm 1D tracking and localization accuracies using acoustic signals on smartphones, in the presence of multipath. To achieve this, we introduce algorithms that use the phase of FMCW signals to disambiguate between multiple paths.
    \item We enable multiple  smartphones to transmit concurrently using time-shifted FMCW acoustic signals and enable concurrent tracking  without sacrificing  accuracy or frame rate.
    \item We present experimental results that show that \name\ can achieve a median 1D   accuracy of 0.7~mm up to distances of 1~m from the smartphone.  The median 1D  accuracy is 1.7~mm  for distances  between 1 and 2~m. \name's median 3d  accuracy is around 2.6~mm.  Further, we can concurrently track up to four smartphones  at a per-device frame rate of 40 frames/sec without sacrificing accuracy.
    \item \red{Finally, we describe the limitations of our system and outline additional work required to more comprehensively evaluate the system in various use case scenarios.}
\squishend

\section{Application Scenarios}
\red{
MilliSonic enables three key  application scenarios.
\squishlist
  \item Current smartphone-based VR headsets (\textit{e.g.}, Google Cardboard) do not have 6DoF motion tracking capability. This is because of the lack of optical transceivers, which limits their usage. MilliSonic enables 6DoF motion tracking capability for smartphone-based VR headsets using only a cheap and small microphone array as a beacon,  without requiring any hardware modifications at the smartphone.
  \item Millisonic can transform the smartphone into a portable beacon for VR tracking. Specifically, instead of requiring the user to carry optical beacons for VR headsets to enable use in different environments, a smartphone can be used as a portable beacon.  To do this, manufacturer can integrate a cheap microphone array into the VR/AR headset. Using this microphone array, the VR headset can also track the motion of other acoustic-enabled devices such as smart watches.
  \item MilliSonic can support concurrent tracking of an unlimited number of microphone arrays (i.e., VR headsets) in the vicinity of a single speaker (i.e., a smartphone). Furthermore, it can also support up to four speakers (i.e., smartphone VR headsets) in the vicinity  of a microphone array without sacrificing accuracy or frame rate.
\squishend
}
\section{\name\ Design}
We first present background on existing FMCW tracking systems and show why they have a limited accuracy for acoustic tracking. We then present our algorithm that uses the FMCW phase to achieve sub-mm 1D tracking. We then describe how to perform 3D tracking using the 1D locations using multiple microphones. Finally, we address various practical issues. 


%



\subsection{FMCW Background}

Acoustic tracking is traditionally achieved by computing  the time-of-arrival of the transmitted signals at the microphones.  At its simplest, the transmitted signal is a sine wave, $x(t)=exp(-j2\pi ft)$ where $f$ is the wave frequency. A microphone at a distance of $d$ at the transmitter, has a time-of-arrival of \red{$d = t_d \times c$} where $c$ is the speed of sound. The received signal at this distance can now be written as, $y(t) = exp(-j2\pi t(t-t_d)$. Dividing by $x(t)$, we get $\hat{y}(t)=\exp(j2\pi ft_d)$. Thus, the phase of the received signal can be used to compute the time-of-arrival, $t_d$. In practice, however, multipath significantly distorts the received phase limiting accuracy. 

To combat multipath, prior work~\cite{mao2016cat, nandakumar2015contactless} uses  Frequency Modulated Continuous Wave (FMCW) chirps where as shown in Fig.~\ref{fig:fmcw} the frequency of the signal changes linearly with time. FMCW has good autocorrelation properties that allow the receiver to \red{differentiate} between multiple paths that each have a different time-of-arrival. {Further compared to  OFDM~\cite{fingerio} and other waveforms \cite{cazac}, FMCW has high spectral efficiency and is ease of demodulate.} Mathematically, the FMCW signal in Fig.~\ref{fig:fmcw} is,
 $x(t)=exp(-j2\pi (f_0+\frac{B}{2T}t)t) = 
 exp(-j2\pi (f_0t+\frac{B}{2T} t^2))$, where $f_0$, $B$ and $T$ are the initial frequency, bandwidth and duration of the FMCW chirp respectively. In the presence of multipath, the received signal can be written as, $y(t)=\sum_{i=1}^M A_iexp(-j2\pi(f_0(t-t_i)+\frac{B}{2T} (t^2+t_i^2-2tt_i)))$, where $A_i$ and $t_i=\frac{d_i(t)}{c}$ are the attenuation and time-of-flight of the $i$th path. Dividing this by $x(t)$ we get,
\begin{align}
\hat{y}(t)&=\sum_{i=1}^M A_iexp(-j2\pi (\frac{B}{T}t_it+f_0t_i-\frac{B}{2T}t_i^2)) \label{eq:fmcwdemod}
\end{align}
The above equation shows that multipath with different times-of-arrival fall into different frequencies. The receiver uses \red{Discrete Fourier Transform (DFT)} to find the first peak frequency bin, $f_{peak}$, that corresponds to the line-of-sight path to the transmitter. It then computes the distance to the receiver as, $d(t)=\frac{cf_{peak}}{B}$.

While this conventional FMCW processing is effective in disambiguating multiple paths that is separated by large distances, it has a limited accuracy when the multiple paths are close to each other. Specifically, the minimum distance resolution for FMCW is in the order of $\frac{c}{B}$ when the separation between frequencies is 1~Hz. Given that smartphones have a limited inaudible bandwidth of {7~kHz} between {17-24~kHz}, prior work cannot distinguish between paths that are close to the direct line-of-sight path and hence have a limited accuracy~\cite{mao2016cat, peng2007beepbeep}. Further, since DFT operations are performed over a whole chirp duration, it limits the frame rate of the system to $\frac{1}{T}$, where $T$ is the FMCW chirp duration.

\begin{figure}
\includegraphics[width=0.25\textwidth]{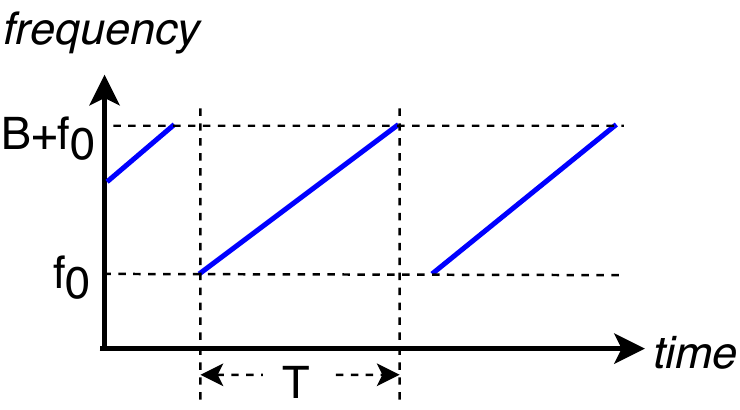}
\vskip -0.15in
\caption{FMCW signal structure.}
\vskip -0.2in
\label{fig:fmcw}
\end{figure}

\subsection{Sub-mm 1D tracking using FMCW phase}

We use phase of the FMCW signals to compute distance. Thus, instead of  using the first peak frequency of the FMCW signal in the frequency domain to estimate the time-of-arrival, our algorithms has two key steps: 1) we apply a dynamic narrow band-pass filter in the time-domain to filter out most multipath that has a distant time-of-arrival from the direct path. This leaves us only a small portion of residual indirect paths around the direct path. 2) We  then extract the distance information from the instantaneous FMCW phase. 

\red{
{\bf Intuition.}  We provide the intuition  for why FMCW phase provides a better accuracy than existing FMCW approaches.}

\red{{\it Traditional FMCW approaches.} Let us first understand the error in traditional peak estimation method for FMCW signals. Tracking error occurs when we have two paths that are without a single frequency bin.  Let us denote the time-of-arrival of the direct path as $t_1$ and its frequency in the demodulated FMCW signal as $\mathbf{f}_{t_1}$. An indirect path with a time-of-arrival of $t_2$ lies at frequency $\mathbf{f}_{t_2}$ in the demodulated signal. When $|\mathbf{f}_{t_1}-\mathbf{f}_{t_2}|<1$, 
the two peaks merge together in the frequency domain resulting in a single peak at approximately $(A_2\mathbf{f}_{t_2}+A_1\mathbf{f}_{t_1})/(A_2+A_1)$, where $A_1$ and $A_2$ are the amplitude of the direct path and the total amplitude of the residual indirect paths. Hence, the frequency error is $(A_2\mathbf{f}_{t_2}+A_1\mathbf{f}_{t_1})/(A_2+A_1)-\mathbf{f}_{t_1}$ which is equivalent to a distance error given by,  
$
d_e^{(peak)}=(\frac{A_2\mathbf{f}_{t_2}+A_1\mathbf{f}_{t_1}}{A_2+A_1}-\mathbf{f}_{t_1})\frac{c}{B} = {(\mathbf{f}_{t_2}-\mathbf{f}_{t_1})}/{(1+\frac{A_1}{A_2})} \frac{c}{B}
$. {\it This error increases linearly with  $\mathbf{f}_{t_2}-\mathbf{f}_{t_1}$ and  proportionally increases with $\frac{A_2}{A_1}$.}}

\red{{\it Our method.}
In contrast, the error in the phase of the FMCW signal is significantly smaller. To see this, let us assume that the amplitude of the residual indirect paths after filters have a lower amplitude than the direct path.  As shown in Fig.~\ref{fig:phaseerror}, the complex representation of the direct path is represented by the blue vector while that of the sum of indirect paths is represented by the red vector. The sum of the two vectors is the resulting signal at the receiver which is represented by the green vector. The maximum phase error occurs when the red vector is perpendicular to the green vector and this corresponds to a phase error of $sin^{-1}(\frac{A_2}{A_1})$. {\it The key observation is that this error does not depend on $\mathbf{f}_{t_2}-\mathbf{f}_{t_1}$ and  increases much slower at  $sin^{-1}(\frac{A_2}{A_1})$. }
}

\red{
Fig.~\ref{fig:errcompare} shows that distance error as a function of $A_2/A_1$ and $|\mathbf{f}_{t_2}-\mathbf{f}_{t_1}|$ for both traditional peak estimation techniques as well as our FMCW phase method. The plots show that the distance errors using peak estimation is severely affected by the time-of-flight of the indirect paths. In contrast,  the distance error using our FMCW phase technique is around 10x lower. We  describe our algorithm and its theoretical analysis in more detail in the Appendix.}

\begin{figure}
\begin{minipage}{.23\textwidth}\centering
\includegraphics[width=0.7\textwidth]{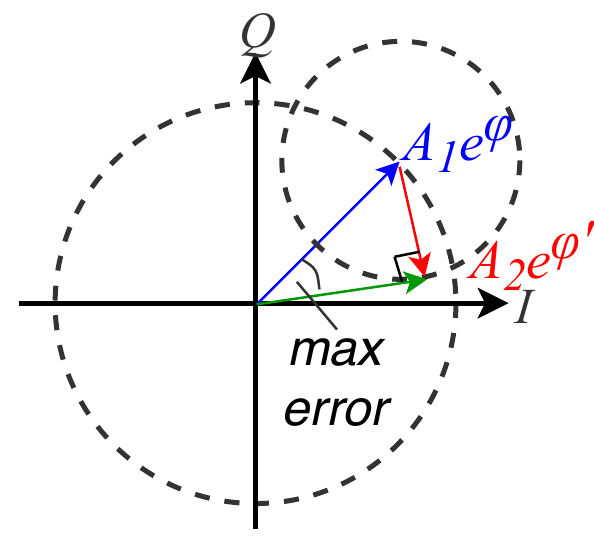}
\vskip -0.15in
\caption{Phase error when one indirect path (red vector) is combined with the direct path (blue vector).}
\vskip -0.15in
\label{fig:phaseerror}
\end{minipage}%
\hskip 0.1in
\begin{minipage}{0.23\textwidth}\centering
\includegraphics[width=1.1\textwidth]{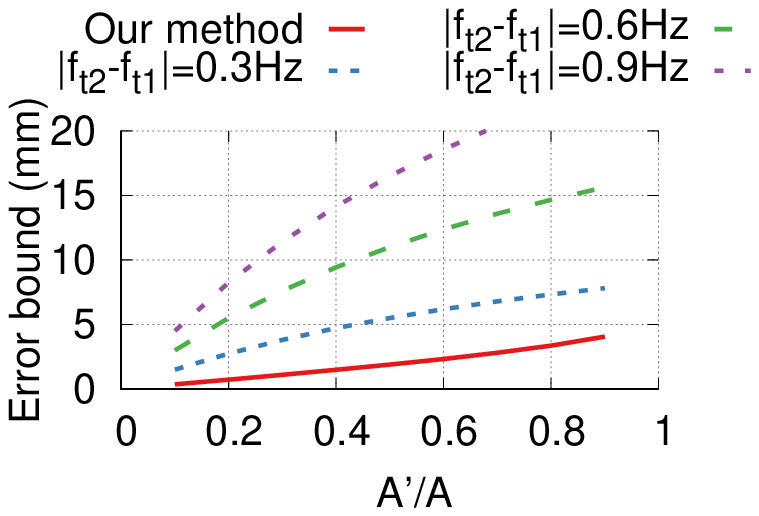}
\vskip -0.15in
\caption{Error comparison of FMCW peak versus our FMCW phase method.}
\vskip -0.15in
\label{fig:errcompare}
\end{minipage}
\end{figure}

\subsection{3D tracking from 1D locations}

The above FMCW phase technique allows us to achieve sub-mm resolution in estimating 1D distances. To achieve 3D tracking we use information from multiple microphones to perform triangulation.  We use multiple microphones instead of speakers to reduce the power consumption as well as to eliminate the complexity of multiplexing the multiple speakers. Since our 1D resolution is high we can also reduce the separation between the microphones  while achieving a good 3D accuracy. Specifically, we place four microphones at four corners of a rectangle.  We have two pairs of microphones in the vertical position and the other two pairs in the horizontal position. Thus, by computing the intersection of all the resulting 1D positions, we can compute the 3D location. 

We note that the accuracy of triangulation is dependent on the distance from the microphone array as well as the separation between the microphones. Specifically, as the distance from the microphone array increases, the resulting 3D accuracy become worse. Similarly, as the separation between microphones increase the 3D accuracy improves, which is why prior work uses a microphone separation of 90~cm~\cite{mao2016cat}. In our solution, since we already achieve sub-mm 1D resolution, we can reduce the separation between microphones to fit the form-factor of VR headset and still achieve good 3D tracking accuracies upto 2~m.  To improve the accuracy and reduce jitters at larger distances, we average the 3D distance measurements within each 10ms duration. Incorporating the 15ms latency of the band-pass filter, we get one distance value every 25ms or a frame rate of 40 frames per second.


\subsection{Addressing practical issues} We describe the practical issues in designing \name.

{\it 1) Phase ambiguity.} Any phase tracking algorithm has to address the problem of phase ambiguity. 
Specifically, we can only extract the phase modulo $2\pi$ from the demodulated chirp ($\hat{\phi}(t)=\mathbf{\phi}(t)\text{ mod }2\pi$).
This leads to two problems: a) how to detect any modulo $2\pi$ shifts during a single chirp; and b) how to estimate the initial $2N\pi$ phase offset, \textit{i.e.,} $\mathbf{\phi}(0)=2N\pi+\hat{\phi}(0)$ at the beginning of each chirp.

\red{Because of the band-pass filter, adjacent samples does not have a phase difference of more than $\pi$. The phase error caused by residual indirect paths is bounded to $(-\pi/2, \pi/2)$.  Thus,} when the phase modulo $2\pi$ sees a sudden jump of more than $\pi$/$-\pi$ between adjacent samples at $t$ and $t-\delta t$, there is a modulo $2\pi$ jump at that time, which we can correct by adding or subtracting $2\pi$ to the computed phase.

To compute the initial $2N\pi$ phase offset at the beginning of each chirp, we use the estimated distance and speed from the end of the previous chirp. Instantaneous speed is computed by performing least square linear regression (which is a linear algorithm in 1D domain) over the distance values in a 10~ms window to reduce the effects of noise and residual multipath.

Specifically, for the $i+1$th received chirp, given the distance $d^{(i)}_{end}$ and speed $v^{(i)}_{end}$ estimated from the end of the $i$th chirp, we can infer the distance of the beginning of the current chirp  $\hat{d}^{(i+1)}_{start}=d^{(i)}_{end}+v^{(i)}_{end}\delta T$ where $\delta T$ is the gap between two chirps. 
We then find the $2N\pi$ offset in addition to the ambiguous initial phase $\hat{\phi}(0)$ that minimize the difference $|d^{(i+1)}(0, \hat{\phi}(0)+2N\pi)-\hat{d}^{(i+1)}_{start}|$. Note that this relaxes the constraints on speed imposed by prior work~\cite{mao2016cat} and instead has a constraint on acceleration. Specifically, prior work~\cite{mao2016cat} had a  constraint on the maximum speed of 1~m/s. In our algorithm, since each  $2\pi$ difference of the phase corresponds to around $2$~cm distance difference, any error smaller than $1$cm will not cause an erroneous $2N\pi$ estimate. The gap between two adjacent chirp is 5~ms and the delay of the band-pass filter is 15~ms.  Hence, as long as the acceleration does not exceed $\frac{1cm}{20ms \times 20ms}=25m/s^2$, our algorithm does not introduce erroneous phase offsets.
\begin{figure}[t!]
    \centering
    \includegraphics[width=0.45\textwidth]{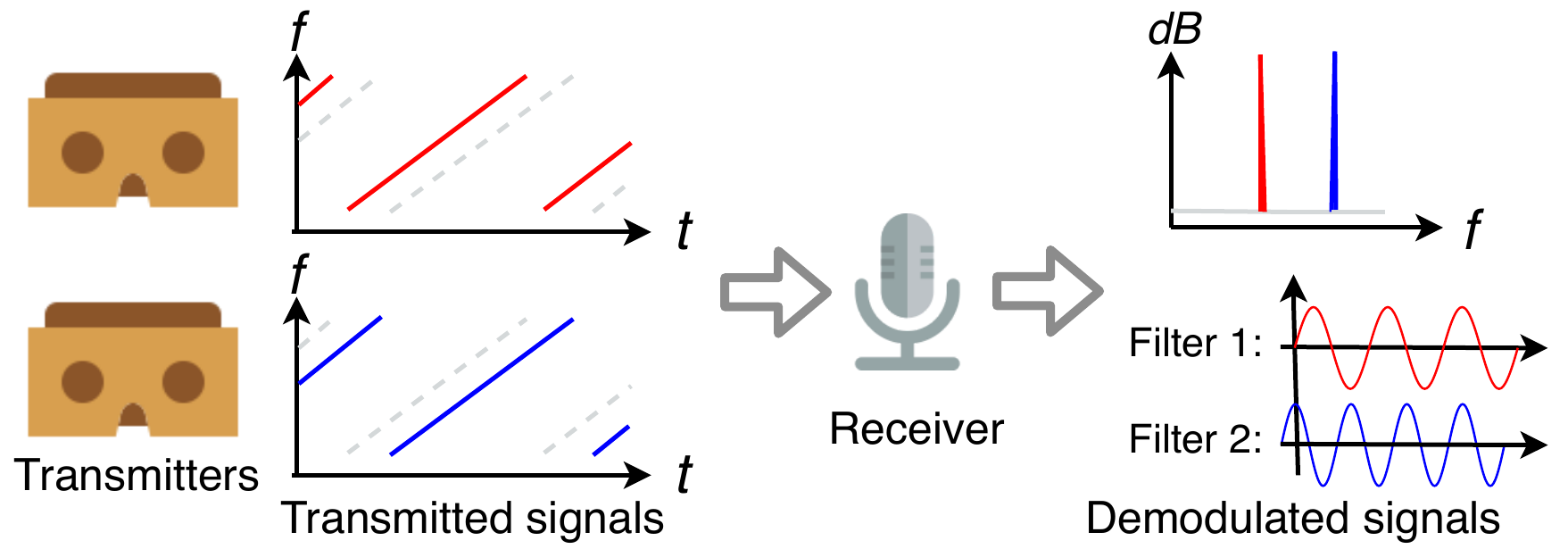}
    \vskip -0.15in
    \caption{Supporting concurrent transmissions using virtual time-of-arrival offsets at each VR headset.}
    \vskip -0.25in
    \label{fig:prototype}
\end{figure}

{\it 2) Clock synchronization.} Clock differences exist in practice which we need to calibrate to achieve tracking. Specifically, we need to calibrate for the initial phase as well as any drift due to clock differences.  To achieve this, at the beginning of the session,  the user touches the smartphone speaker with a microphone at the receiver.  The receiver meanwhile starts recording the chirp for five seconds and runs the above tracking algorithm. Using this setup, we use autocorrelation to determine a starting time for the chirp at the receiver. Because in this setup, at zero distance, the signal has a high SNR, there is no motion and little indirect path, the estimate of the distance from the peak of the FFT result, denoted by $D$, is accurate.  As a result,  we can find the initial $2N\pi$ phase offset from $D$. Specifically, we find the best $2N\pi$  which minimizes the differences of the two measurements $|D-d(0, \hat{\phi}(0)+2N\pi)|$. Finally, to address the clock drift, the receiver detects a constant drift in {the distance measurement} within the five seconds which is linear to the clock difference. We then compensate the clock difference by removing this drift for the following measurements at the receiver side.

{\it 3) Failure detection and recovery.} Our algorithms relies on continuous tracking. When tracking failure occurs, the subsequent measurements are also prone to errors.  In practice, failures occur due to  occlusions and  noise.  

While acoustic signal can penetrate some occlusions like fabrics, for other occlusions like wood and human limbs, refraction between different transmission mediums causes a dense multipath around the
direct path which is also greatly attenuated. Therefore, the above algorithm fails because it doesn't satisfy the premise that the direct path dominates the filtered demodulated signal. When such error happens during a chirp, it will cause fluctuations in phase. Thus, there would be multiple $2\pi$ phase tracking error during the chirp, leading to a larger than $2cm$ distance error at the end of the chirp. When the error happens between two chirps, it will lead to wrong $2N\pi$ estimate that causes larger than $~2cm$ error for the subsequent chirps. Similarly at longer ranges the signal attenuates which results in noisy phase measurements which can also lead to wrong $2N\pi$ estimates.

To detect these failures, we utilize the redundancy across the microphones. It is  unlikely that all the four microphone encounter the same extra phase error at the same time because of their different locations. Hence, if the measurements from some of the four microphones are outliers with at least $2cm$ measurement errors from the others, it indicates an error. 
When such a failure is detected, if the anomaly is only in one microphone, the receiver compensates the $2\pi$ offset until it is in the similar range of the other three microphones. If sustained failures occurs (which rarely happens), our algorithms fall back to the traditional peak estimation method for FMCW signals and notify the user.

\begin{figure}[t!]
    \centering
    \begin{subfigure}{0.178\textwidth}
    \includegraphics[width=1\textwidth]{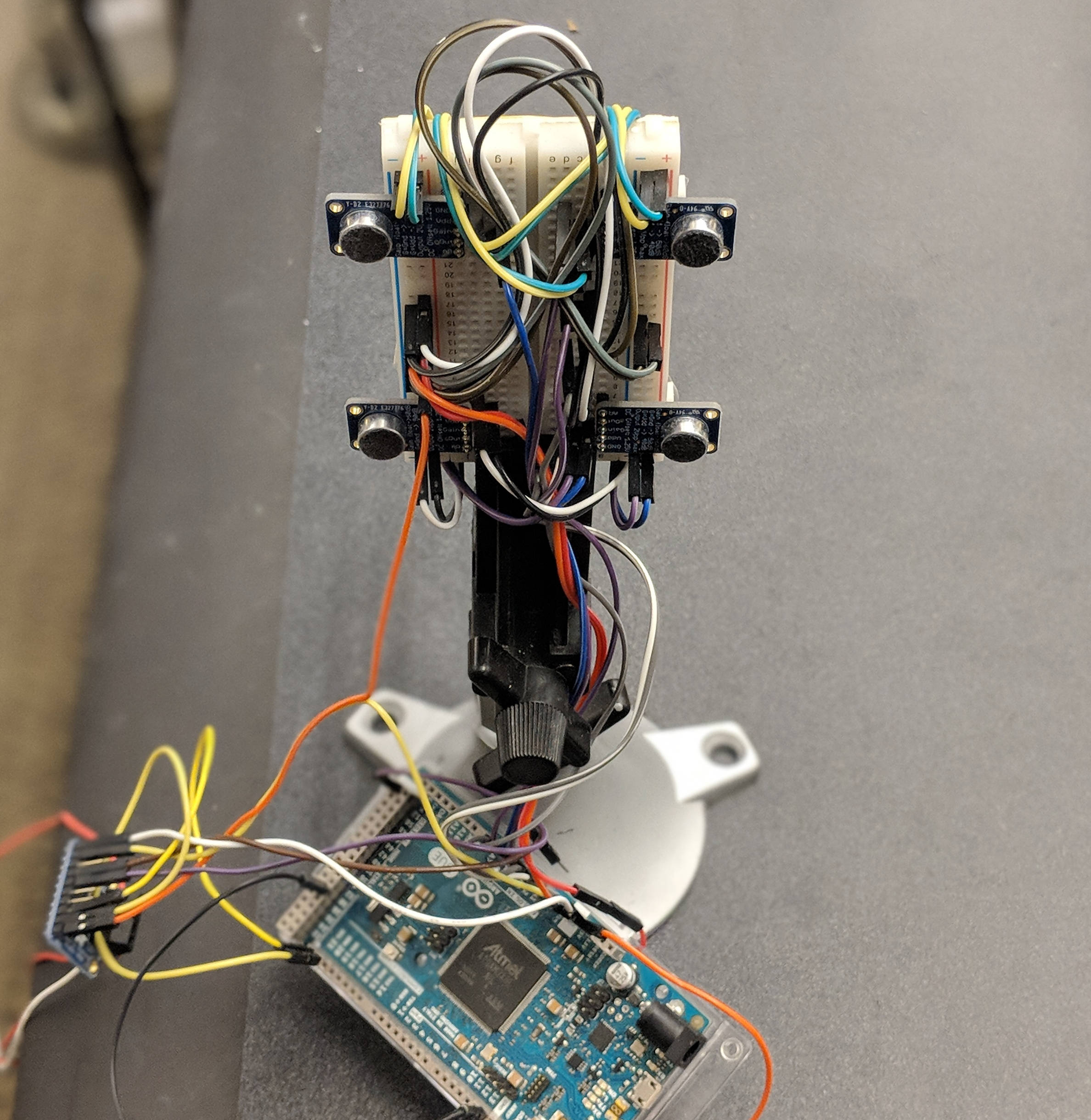}
    \caption{$6cm\times 5.35cm$}
    \end{subfigure}
    \hspace{2em}
    \begin{subfigure}{0.18\textwidth}
    \includegraphics[width=1\textwidth]{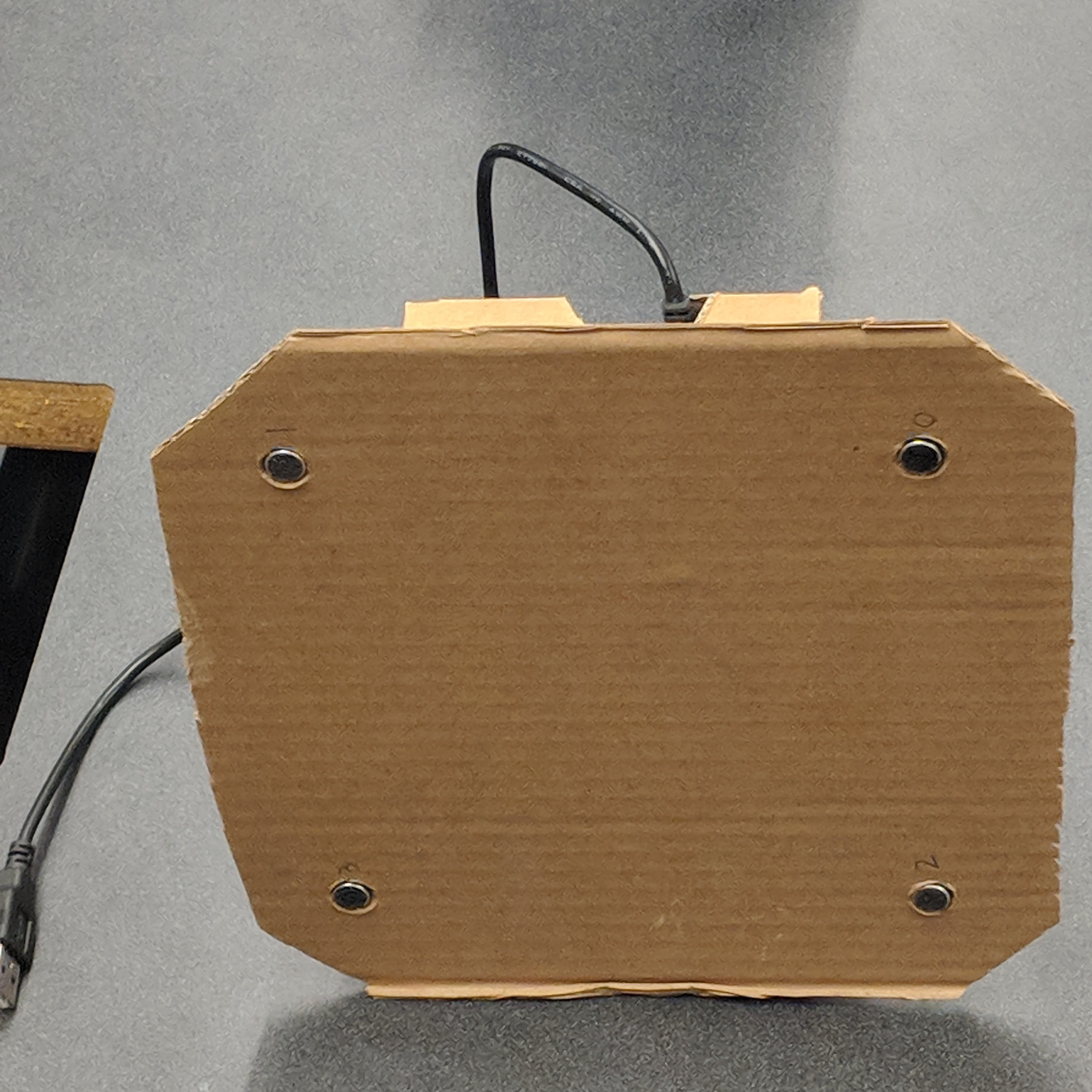}
    \caption{$15cm\times 15cm$}
    \end{subfigure}
    \vskip -0.15in
    \caption{Prototypes of MilliSonic microphone arrays.}
    \vskip -0.2in
    \label{fig:minisonic}
\end{figure}

\section{Tracking multiple devices}

The algorithm described above is unidirectional in that  signals can only \red{propagate} from the speaker to the microphones.  Because of this, MilliSonic can support tracking of an unlimited number of microphone arrays in the vicinity using a  one single smartphone speaker. Thus a single speaker can be used as a beacon to support tracking for multiple VR headsets that integrate our microphone array. 

On the other hand, tracking multiple smartphone-based VR headsets like the Google cardboard using a single microphone array is challenging since it involves transmissions from multiple smartphones.  Traditionally, wireless systems support multiple transmissions using either time-division multiplexing  or frequency-division multiplexing. In time-division multiplexing, since each smartphone speaker is only allowed to use a fraction of the time, it translates to a lower refresh rate that is inversely proportional to the number of smartphones. Using frequency-division multiplexing is challenging given the limited inaudible bandwidth on smartphones and since the accuracy depends on the bandwidth.

To achieve concurrent transmissions from all the smartphone speakers, we note that 
from Eq.~\ref{eq:fmcwdemod}, any two received FMCW paths with a time-of-arrival difference of $\delta t$, would lie in a different FFT bin. This indicates that two devices that have significantly different time-of-arrivals are at distant FFT bins and hence can be concurrently decoded.

We utilize this to support concurrent transmissions from multiple speakers. The challenge is that  two devices can have similar time-of-arrivals. To address this issue, we introduce \textit{virtual} time-of-arrival offsets at each device. Specifically, at the beginning, the  $N$ smartphones transmit FMCW chirps using time division. The receiver computes their time-of-arrivals using our algorithm, denoted by $t_d^{(i)}$ for the $i$th smartphone and sends back $\frac{iT}{2N}-t_d^{(i)}$ to each transmitter $i$, which is the virtual offset for transmitter $i$, using a Wi-Fi connection. The transmitter $i$ then intentionally delays its transmission by its virtual offset. The receiver picks these offsets to ensure that they are equally separated across all the FFT bins. This allows concurrent speaker transmissions.
 
Now at the receiver, there exist $N$ separate peaks evenly distributed in the frequency domain, which corresponds to $N$ evenly distributed time-of-arrivals, where the $i$th time-of-arrival is from the $i$th transmitter. 
The receiver can regard transmissions from other transmitters as multipath. Because of the orthogonality, they are filtered out by the band-pass filter at the first step. It can then track the phase of each of them using five different band-pass filters without losing accuracy nor frame rate.  After calculating the time-of-arrival of the signal from each speaker, it subtracts the virtual offset from it and obtains the final distance computation. 
\begin{figure}[!t]
\centering
\hskip -0.15in
\begin{subfigure}[b]{0.235\textwidth}
  \includegraphics[width=1.1\linewidth]{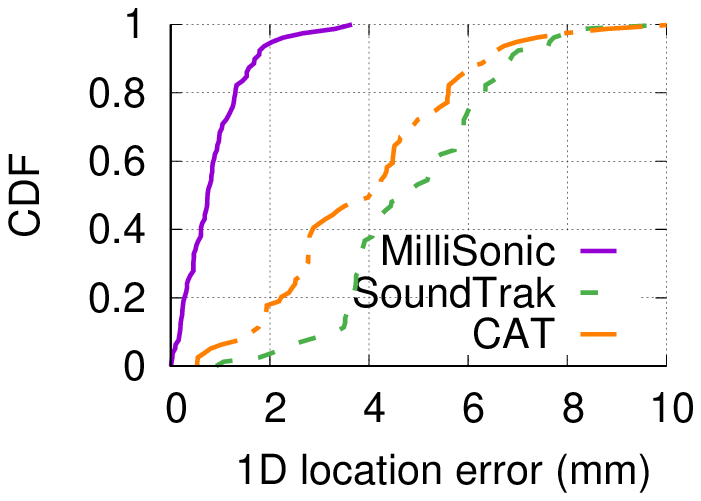}
  \caption{0-1m}
\end{subfigure}
\begin{subfigure}[b]{0.235\textwidth}
  \includegraphics[width=1.1\linewidth]{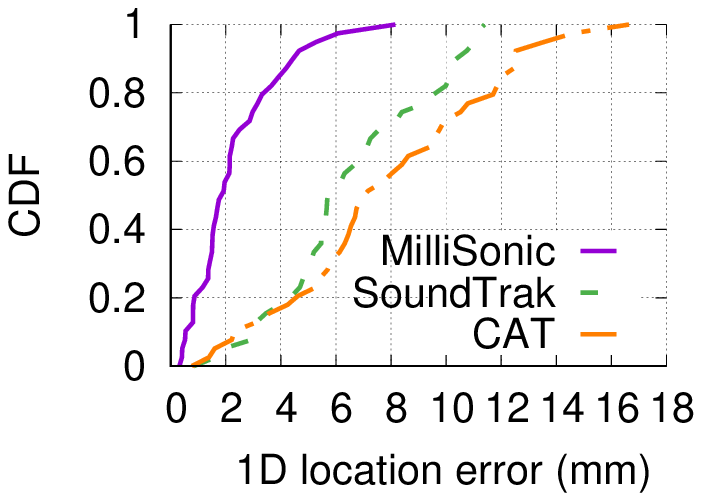}
  \caption{1-2m}
\end{subfigure}
\vskip -0.15in
  \caption{1D accuracy compared with CAT and SoundTrak.}\label{fig:rangecompare}
  \vskip -0.2in
\end{figure}

We note that because of motion,  over time, the time-of-arrivals for  multiple speakers can merge together. This would prevent the receiver from tracking all the devices concurrently. 
To prevent this, the receiver sends back a new set of virtual delays using Wi-Fi whenever the peaks between any two devices get close to each other in the FFT domain. \red{When the virtual delays get updated, which happens infrequently, there is an additional delay of at most one chirp duration (45 ms), divided by the number of transmitters.}

\section{Implementation}
We implement \name\ using  Android smartphones. We build an app that \red{emits} 45~ms 17.5-23.5~kHz FMCW acoustic chirps through the smartphone speaker. We tested it using Samsung Galaxy S6, Samsung Galaxy S9 and Samsung Galaxy S7 smartphones.  We build our microphone array using off-the-shelf electronic elements shown in Fig.~\ref{fig:prototype}. We use an Arduino Due connected to four MAX9814 Electret Microphone Amplifiers~\cite{max9814}. We attach the elements to a $20cm\times 20cm\times 3cm$ cardboard and place the four microphone on four corners of a $15cm\times 15cm$ square on one side of the cardboard. We also create a smaller {$6~cm\times 5.35~cm\times 3cm$} microphone array. We connect the Arduino to a Raspberry Pi 3 Model B+~\cite{raspberrypi} to process the recorded samples.   The software is implemented in the Scala programming language so that it can run on both a Raspberry Pi and a laptop without modification. It utilize multithreading to improve the performance. In our test, it requires 40ms and 9ms to process a single 45ms chirp on the Raspberry Pi and PC, respectively. Hence, it support real-time tracking on both platforms.

\section{Evaluation}

 We first evaluate the 1D and 3D tracking accuracy in a controlled \red{lab} environment. We then \red{recruited} ten participants  to evaluate the real-world performance of MilliSonic. 

\begin{figure}
    \begin{minipage}{.23\textwidth}\centering
    \includegraphics[width=1.1\textwidth]{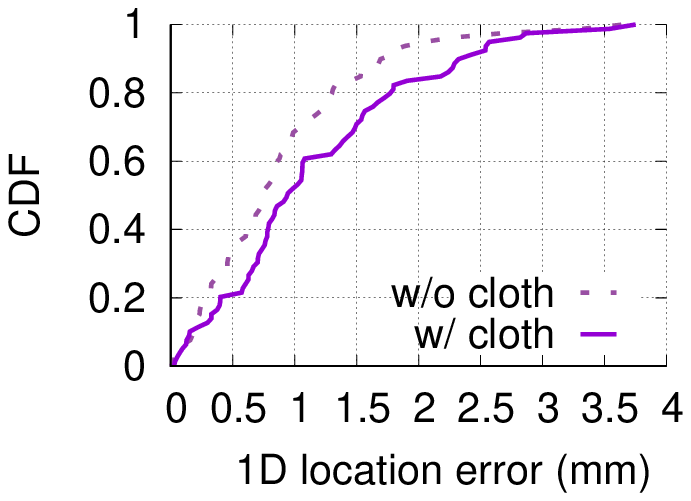}
    \caption{Impact of cloth as an occlusion.}
    \label{fig:occlusions}
    \end{minipage}%
    \hskip 0.1in
    \begin{minipage}{.23\textwidth}\centering
      \includegraphics[width=1.1\linewidth]{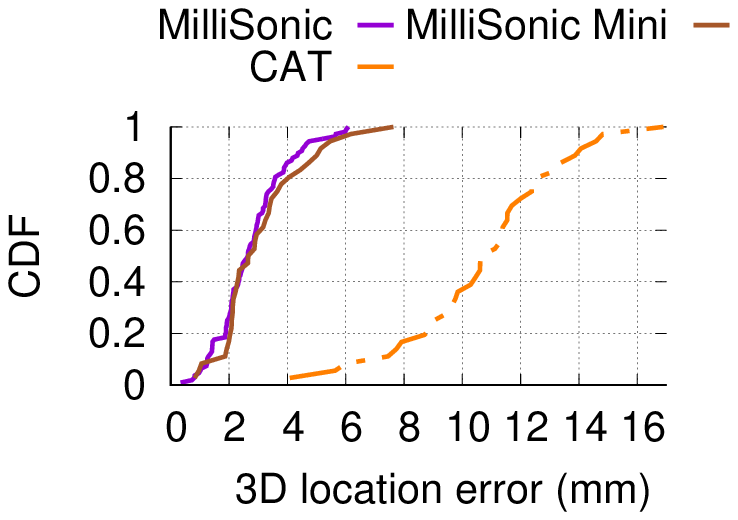}
      \captionof{figure}{3D localization accuracy.}
      \label{fig:3dcdf}
    \end{minipage}
    \vskip -0.15in
\end{figure}

\renewcommand{\subsection}[1]{\textbf{#1}}

\subsection{1D Localization Accuracy.}
To get an accurate ground truth, we use a linear actuator with a PhidgetStepper Bipolar Stepper Motor Controller~\cite{phidget} which has an movement resolution of $0.4\mu m$ to precisely control the location of the platform. We place a Galaxy S6 smartphone on the platform and place our microphone array on one end of the linear actuator. At each distance location, we repeat the algorithm ten times and record the measured distances.  We also implement  CAT~\cite{mao2016cat} and SoundTrak~\cite{zhang2017soundtrak}. CAT combines FMCW with Doppler effect that is estimated using an additional carrier wave and SoundTrak uses phase tracking. To achieve a fair comparison, we implement CAT using the same $6kHz$ bandwidth for FMCW and an additional $16.5kHz$ carrier. We implement SoundTrak using a $20kHz$ carrier wave. We do not use IMU data for all three systems.

Fig~\ref{fig:rangecompare}(a) and (b) plot the CDF of the 1D errors for two different distance ranges. We show the results for \name, CAT as well as SoundTrak. The plots show that our system achieves a median accuracy of {0.7~mm} up to distances of 1~m. In comparison, the median accuracy was {4 and 4.8} for CAT and SoundTrak respectively. When the distance between the smartphone and the microphone array is between 1--2 m, the median accuracy was {1.74~mm, 6.89~mm and 5.68~mm} for \name, CAT and SoundTrak respectively.  This decrease in accuracy is expected since with increased distance the SNR of the acoustic signals reduces. We also note that at closer distances, the error is dominated by multipath which our algorithm is designed to disambiguate multipath accurately.

\begin{figure}[!t]
\begin{subfigure}{0.235\textwidth}
  \includegraphics[width=1.1\textwidth]{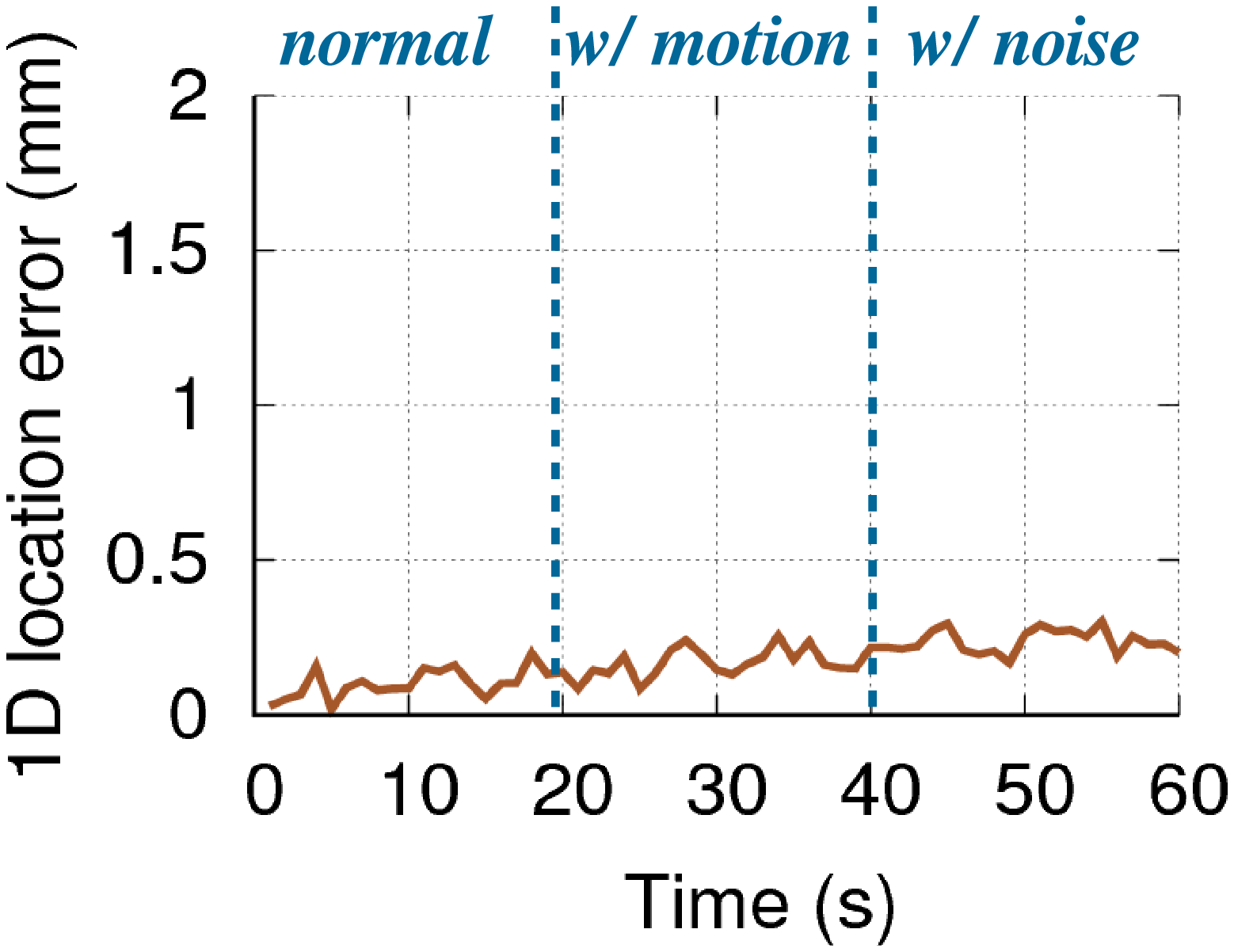}
  \caption{Motion and noise}\label{fig:surrounding}
\end{subfigure}
\begin{subfigure}{0.235\textwidth}
  \includegraphics[width=1.1\textwidth]{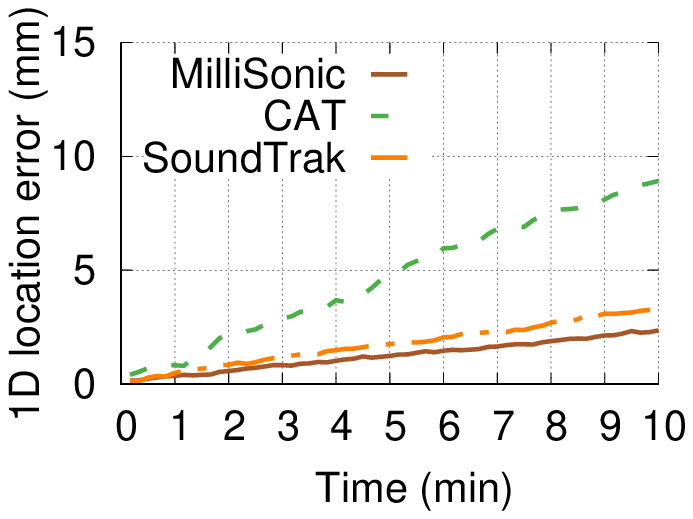}
  \caption{Long-term drift}\label{fig:drift}
\end{subfigure}
\vskip -0.15in
\caption{Effect of environmental motion, noise, and drift.}
\vskip -0.2in
\end{figure}

\subsection{Effect of environmental motion and noise.} We place the smartphone at $40cm$ on the linear actuator. We invite a participant to randomly move their body at a distance of $0.2m$ away from linear actuator. We also introduce acoustic noise by randomly pressing a keyboard and playing  pop music using another smartphone that is around $1m$ away from the linear actuator. 
Fig~\ref{fig:surrounding} shows the error. We can see that \name\ is resilient to random motion in the environment because of multipath resilience properties. Further, since we filter out the audible frequencies, music playing in the vicinity of our devices, does not affect its accuracy.

\subsection{Distance drift over time.} Tracking algorithms typically can have a drift in the computed distance over time.  We next, measure the drift in the location as measured by our system as a function of time. We also repeat the experiment for both CAT and SoundTrak. Specifically,
We place the smartphone at $40cm$ on the linear actuator for 10 minutes. We place the microphone array at the end of the actuator. We measure the distance as measured by each of these techniques over a duration of 10 minutes which we plot in Fig.~\ref{fig:drift}. SoundTrak and \name\ uses phase to precisely obtain the clock difference of the two devices, while CAT relies on \red{detecting the drift of peak frequencies}, which results in a larger drift. \red{With  a few millimeter drift at 10 minutes, \name\ has better stability than state-of-the-art acoustic tracking systems.} 

\red{\subsection{Effect of Environments.}} \red{To verify the robustness to different environments, we additionally evaluate the 1D accuracy in a) an anechoic chamber; b) a ~$200m^2$ lobby; and c) an outdoor open balcony; the median error was 0.75mm, 1.11mm and 0.94mm, respectively, at a distance of 0.6m.  }

\subsection{Tracking through occlusions.} Unlike optical signals, acoustic signals can traverse through occlusions like cloth. To evaluate this, we place the smartphone on a linear actuator and change its location between 0 to 1~m away from the microphone array. We place a cloth on the smartphone that occludes it from the microphone array. We then run our algorithm and compute the distance at each of the distance values. We repeat the experiments without the cloth covering the smartphone speaker. Fig.~\ref{fig:occlusions} plots the CDF of the distance error across all the tested locations both in the presence and absence of the cloth. The plots show that the median accuracy is {0.74~mm and 0.95~mm} in the two scenarios, showing that \name\ can track devices through cloth. This is beneficial when the phone is in the pocket and the microphone array is tracking its location through the fabric.

\subsection{3D Localization Accuracy}
Next, we measure the 3D localization accuracy of \name. To do this we create a working area of  $0.6m\times 0.6m \times 0.4m$. We then print a grid of fixed points  onto a $0.6m\times 0.6m$ wood substrate. We place the receiver on one side of the substrate, and place the smartphone's speaker at each of the points on the substrate. We also change the height of the substrate across the working area  to test the accuracy along the axis perpendicular to the substrate. To compare with prior designs, we run the same implementation of CAT as in our 1D experiments. Note that while CAT~\cite{mao2016cat} uses a separation of 90~cm, we still use $15cm$ microphone separation for CAT. This allows us to perform a head-to-head comparison as well as evaluate the feasibility of using a small  microphone array.


\begin{figure*}[!htb]
    \centering
    \begin{subfigure}{0.24\textwidth}
    \includegraphics[width=1\textwidth]{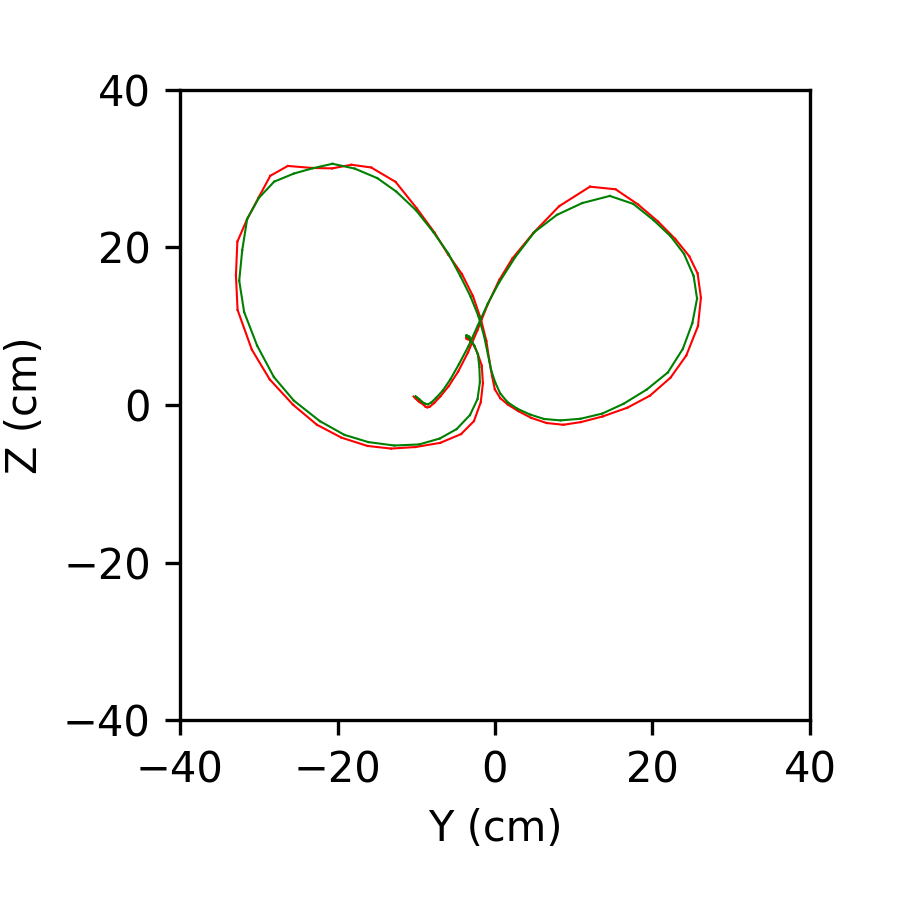}
    \caption{\textit{Infinity}}
    \end{subfigure}
    \begin{subfigure}{0.24\textwidth}
    \includegraphics[width=1\textwidth]{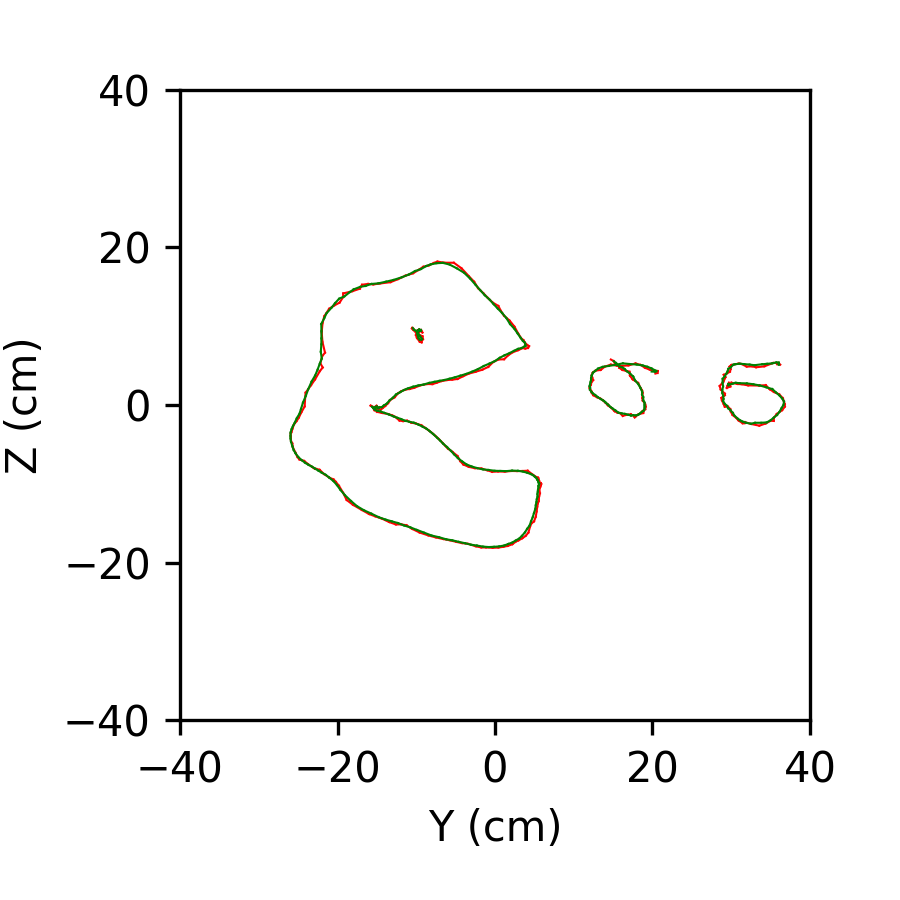}
    \caption{\textit{Pac-Man}}
    \end{subfigure}
    \begin{subfigure}{0.24\textwidth}
    \includegraphics[width=1\textwidth]{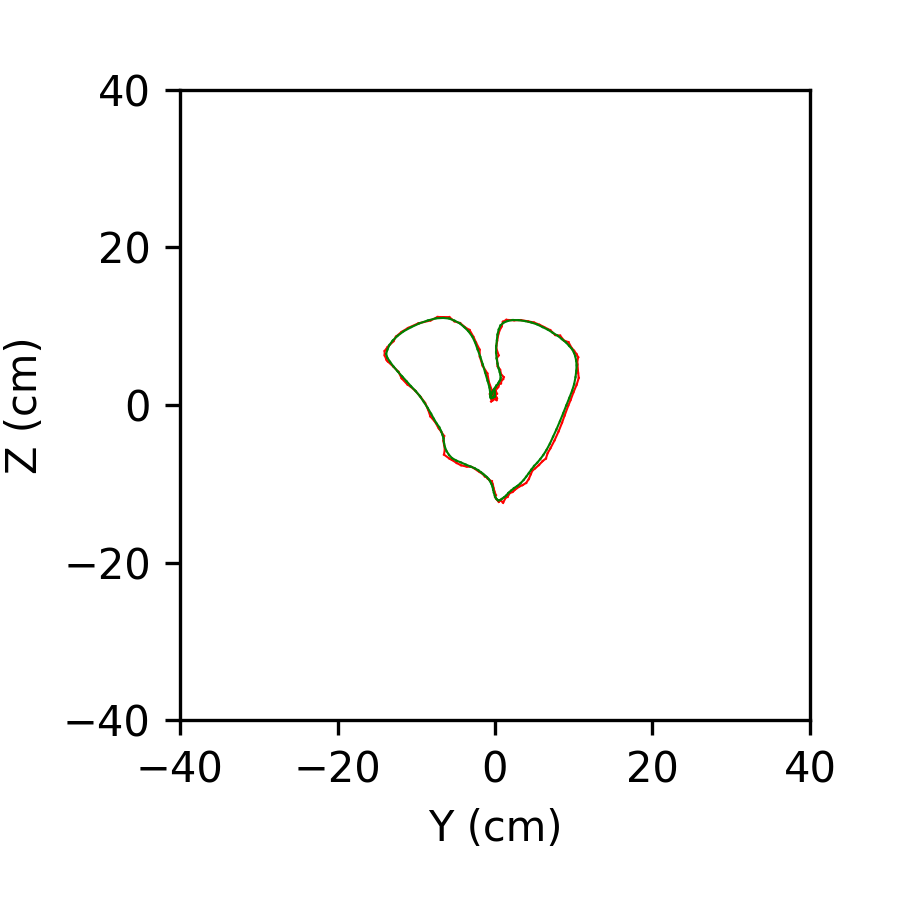}
    \caption{\textit{Heart}}
    \end{subfigure}
    \begin{subfigure}{0.24\textwidth}
    \includegraphics[width=1\textwidth]{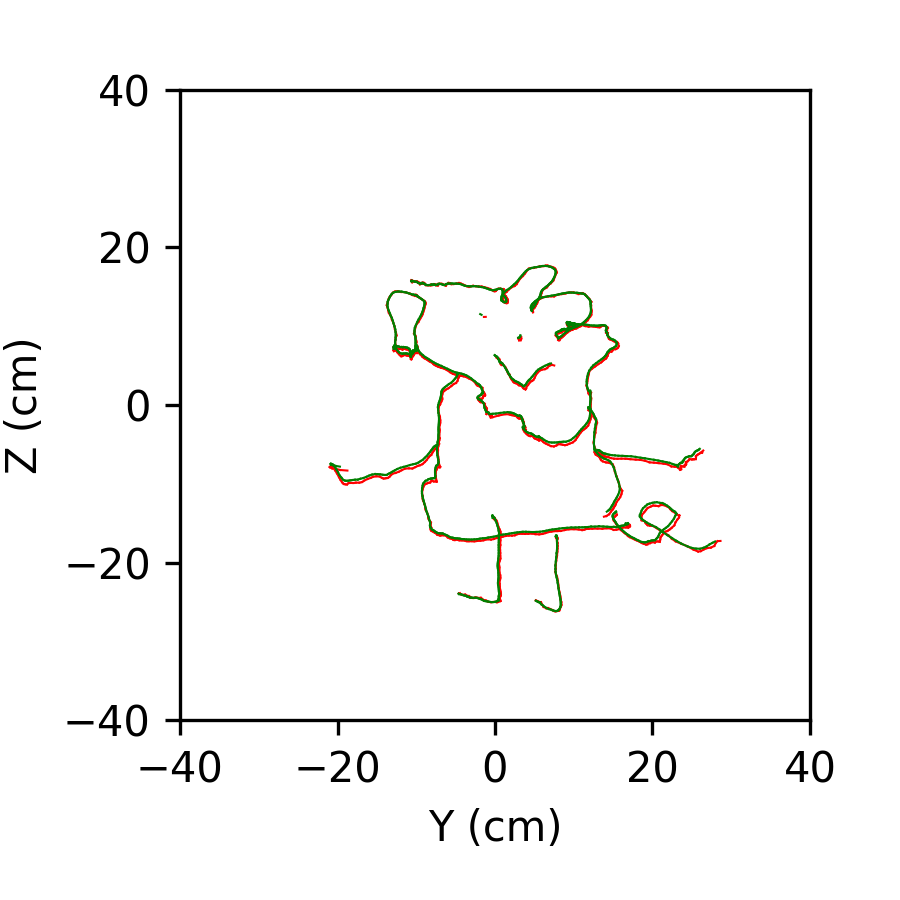}
    \caption{\textit{Peppa Pig}}
    \end{subfigure}
    \vskip -0.15in
    \caption{Sample drawings by  participants. Green and red traces are captured by HTC Lighthouse and MilliSonic  respectively.}
    \vskip -0.15in
    \label{fig:samples}
\end{figure*}

\begin{figure}
\vskip -0.1in
    \includegraphics[width=0.4\textwidth]{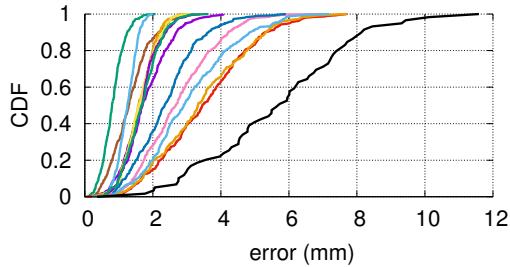}
    \vskip -0.15in
  \caption{The CDF of absolute 3D error across participants. \red{The black curve corresponds to the \textit{Infinity} in Fig.~\ref{fig:samples}}.}\label{fig:freemotionacc}
  \vskip -0.2in
\end{figure}

Fig~\ref{fig:3dcdf} shows the CDF of 3D location errors for \name\  and CAT in a working area across all the tested locations in our working area. The plots show that \name\ achieves a median 3D  accuracy of {2.6~mm} while CAT has a 3D accuracy of {10.6~mm}. The larger errors for CAT is expected since it is designed for microphone/speaker separations of 90~cm.

To understand the limits of the microphone separation, we further reduce the  microphone separation to $5.35cm$ using a breadboard hardware prototype as shown in Fig~\ref{fig:prototype}. This reduces the dimensions of the microphone array to approximately $6cm\times 6cm\times 3cm$. We show the 3D error results in Fig.~\ref{fig:3dcdf} labelled as \name\ Mini. We can see that there is little accuracy degradation. This shows that \name\ can enable a portable beacon design that uses microphone arrays to track smartphone based Google cardboard VR systems. Similarly, given these dimensions, the microphone array can be integrated into a VR headset which can then be tracked in 3D using a commodity smartphone as a beacon.

\subsection{Free Motion Tracking with Participants.} We build a simple \textit{draw-over-the-air} interface based on \name. We put our microphone array hardware on the table to act as the beacon. We implement a software app on Android platforms where participants can move the smartphone and touch the screen to draw 2D images on the $y-z$ plane over the air. Meanwhile, the strokes are rendered on an external screen in real-time.  We use a Samsung S6 smartphone for this study.  We compare \name\ to a HTC Vive Controller which is tracked using the HTC Lighthouse positioning system~\cite{htcvive}. Specifically we put two Lighthouse base stations on two tables with a distance of 2.5m. We attach the HTC Vive controller to the smartphone using tape and use the HTC Lighthouse positioning system to track its motion. Since the Lighthouse positioning system has an accuracy of around 1mm~\cite{niehorster2017accuracy}, we still use it as the ground truth. 

We recruit ten participants ({2 female and 8 male}) between the ages of {22-29} to draw on the air using \name. None of them were provided any monetary benefits. The participants were free to draw whatever they like and see the motion on the screen in real-time. 
We added a \textit{draw} button on the screen, so that when a user pushes the button, the app uses TCP to send the action to another server which records the traces and renders them on the screen in real-time. Each participant had to draw at least one figure of their choosing but could draw multiple figures if they wanted.  The participants in total drew 14 images. Fig.~\ref{fig:samples} shows five samples and the corresponding ground truth captured by a HTC Lighthouse.

We compare \name's accuracy with the ground truth from the HTC Lighthouse system. Because of frame rate differences, we linearly interpolate the ground truth result, find the point at the ground truth that is nearest to each point in our tracking result, and compute their difference. We show the CDFs of 3D accuracy in Fig.~\ref{fig:freemotionacc} for each of the 14 drawings which show accurate tracking capabilities using acoustic signals. The outlier orange curve corresponds to the \textit{infinite} drawing in Fig.~\ref{fig:samples} which shows that the practical accuracies are high. \red{There were a few instances when a wrong $2N\pi$ phase offset was estimated in the phase ambiguity removal algorithm on one of the microphones. This was however detected and successfully recovered by our failure recovery mechanism and did not affect the following chirp.}

We also measure the free motion speed distribution, acceleration distribution and distance distribution across the participants, which we plot in a Fig.~\ref{fig:userstudystat}. We  see  a  range of speeds and distances  during this user study. We also note that the maximum acceleration was $21~m/s^2$ with only 1 occurrence which was below our $25m/s^2$ limit.


\begin{figure*}[t!]
  \includegraphics[width=0.33\linewidth]{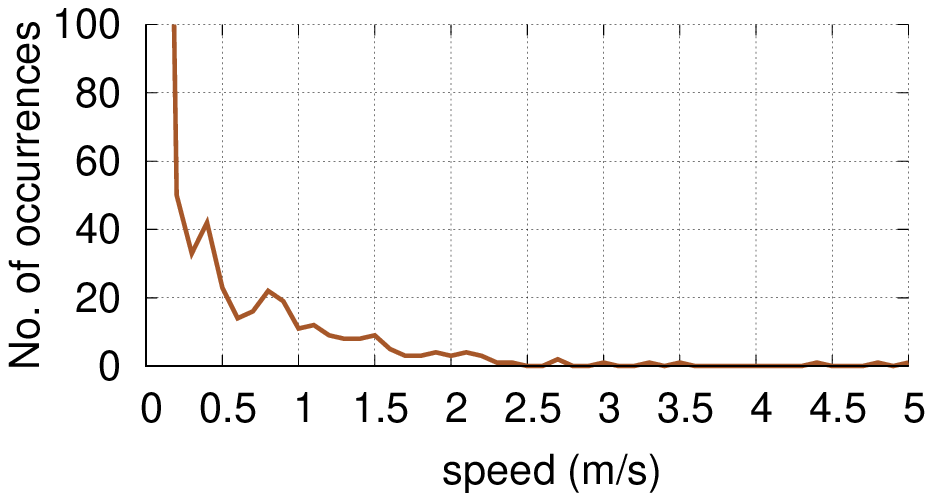}
  \includegraphics[width=0.33\linewidth]{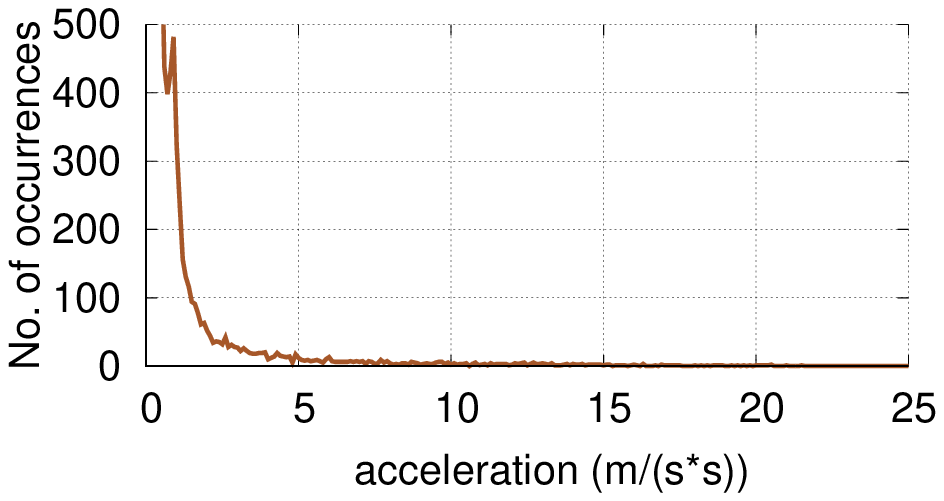}
  \includegraphics[width=0.33\linewidth]{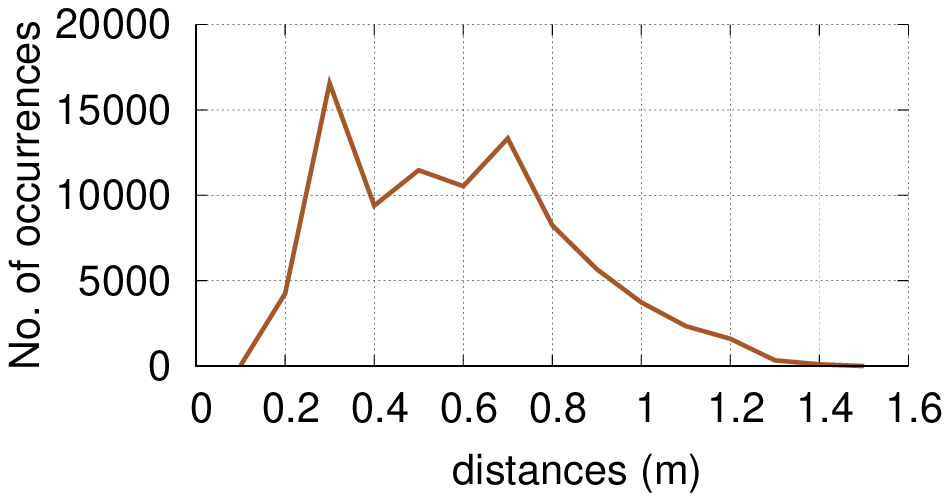}
  \vskip -0.15in
  \caption{Speed, acceleration and distance distribution during the user study.\\}\label{fig:userstudystat}
  \vskip -0.15in
\end{figure*}

\begin{figure}[b]
\vskip -0.25in
    \centering
    \includegraphics[width=0.4\textwidth]{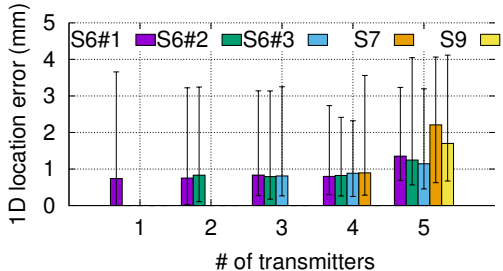}
    \vskip -0.15in
    \caption{Tracking error with concurrent smartphones.}
    \vskip -0.1in
    \label{fig:concurrent}
\end{figure}

\subsection{Enabling concurrent transmissions.} Finally, to evaluate concurrent transmissions with MilliSonic, we use five smartphones (3 Galaxy S6, 1 Galaxy S7, 1 Galaxy S9) as transmitters and one single microphone array to track all of them.  We use the same experimental setup as the 1D tracking, but place all five smartphones on  the linear actuator platform. We repeat experiments with different number of concurrent smartphones ranging from one to five. Fig.~\ref{fig:concurrent} shows the 1D tracking error of each of the smartphones in the range of 0-1m with different number of concurrent smartphones.  
We see that our system can support up to 4 concurrent transmissions without affecting the accuracy. With five concurrent smartphones, nearby peaks start to interfere with each other, resulting a slightly worse accuracy. 

\section{Related Work}
Prior work can be categorized as follows.

{\it Tracking using IMUs.} Inertial measurement units (IMUs) are a frequently used hardware to enable device tracking. IMUs sense 3D linear acceleration, rotational rate and heading reference which can all be fused together~\cite{fourati2015heterogeneous}. Gaming controllers~\cite{switch, wii} as well as many low-end VR systems~\cite{gearvr,oculusgo,daydream} use IMUs to support motion tracking. However IMUs do not accurately provide absolute positioning information. This is because position requires double integration of acceleration, which introduces a large drift error~\cite{feliz2009pedestrian}. 

{\it Tracking using IR/visible light.} The HTC Vive VR~\cite{htcvive} system uses a laser  \textit{Lighthouse} beacon emitting coherent IR signal to localize the headset as well as the controllers. Here, a laser emitter sweeps coherent IR light spatially and the 3D location is computed using  the time it takes for the IR signals to hit the photo-diodes on the receivers. Incoherent, infrared and visible light from LEDs can also be used for localization by cameras  using specific colors. The Sony PlayStation VR (PSVR)~\cite{playstationmove} system uses special visible colors that are tracked by a standalone camera located in a fixed position. Oculus Rift\cite{oculusrift} VR system employs a separate IR camera. The headset and controller are marked with IR LED markers captured by the IR camera. Despite being accurate enough for VR/AR applications, these techniques work poorly in bright environments~\cite{psvrissue}. More importantly, they  require a dedicated beacon hardware. In contrast, our design can use a smartphone as a basestation for tracking the VR headset.

{\it Tracking using cameras.} Unlike previous methods, Simultaneous Localization and Mapping (SLAM) techniques have also been used to enable tracking without relying on any beacon infrastructure. Using SLAM, devices can locate  themselves solely based on the environment captured by its camera. AR systems such as Microsoft Hololens\cite{hololens} and Magic Leap One\cite{magicleap} headsets use SLAM to achieve such tracking capabilities.
SLAM performance however highly depends on the environment including light conditions and variety of visual features~\cite{montemerlo2002fastslam,taketomi2017visual}. Hence, it is not as robust as outside-in tracking methods. SLAM is also a computational intensive algorithm that often requires specialized hardware accelerators to support real-time tracking. As a result, SLAM is unlikely to be appropriate for tracking tiny controllers.

\begin{table*}[t]
\centering
\resizebox{\textwidth}{!}{
\begin{tabular}{|l|l|l|l|l|l|l|l|l|l|l|l|l|}
\hline
System                & Setup                                                                        & \begin{tabular}[c]{@{}l@{}}Ranging\\ technique\end{tabular}                     & \begin{tabular}[c]{@{}l@{}}Need\\ IMU\end{tabular} & Audible            & Dimension & Accuracy & Latency               & \begin{tabular}[c]{@{}l@{}}Refresh\\ rate\end{tabular} & Range               & \begin{tabular}[c]{@{}l@{}}Concurrent\\ transmission\end{tabular}  & \begin{tabular}[c]{@{}l@{}}Mic/speaker\\ Separation\end{tabular}          \\ \hline
BeepBeep              & Phone-Phone                                                                  & Autocorrelation                                                                 & N                                                  & Y                  & 1D        & 2cm      & 50ms                  & 20Hz                                                   & 12m                    & N                                             & -                                                                            \\ 
Swordfight            & Phone-Phone                                                                  & Autocorrelation                                                                 & N                                                  & Y                  & 1D        & 2cm      & 46ms                  & 12Hz                                                   & 3m                   & N                                                 & -                                                                             \\ 
CAT                   & Speaker-Phone                                                                & FMCW                                                                            & Y                                                  & N                  & 3D        & 9mm      & 40ms                  & 25Hz                                                   & 7m                   & N                                               & 90cm                                                                          \\ 
SoundTrak             & Speaker-Watch                                                                & Phase tracking                                                                  & N                                                  & Y                  & 3D        & 13mm     & 12ms                  & 86Hz                                                   & 20cm                 & N                                               & 4cm                                                                           \\ 
Sonoloc               & Phone-Phone                                                                  & Autocorrelation                                                                 & N                                                  & Y                  & 2D        & 6cm      & 3.2-48s               & -                                                      & 17m                  & N                                                 & -                                                                             \\ \hline
\multirow{2}{*}{\name} & \multirow{2}{*}{\begin{tabular}[c]{@{}l@{}}Microphone-\\ Phone\end{tabular}} & \multirow{2}{*}{\begin{tabular}[c]{@{}l@{}}FMCW+\\ Phase tracking\end{tabular}} & \multirow{2}{*}{N}                                 & \multirow{2}{*}{N} & 3D        & 2.6mm    & 25-40ms & \multirow{2}{*}{$\geq$40Hz}                    & \multirow{2}{*}{3m} & \multirow{2}{*}{Y}                               & \multirow{2}{*}{6-15cm}                                        \\ \cline{6-8}
                      &                                                                              &                                                                                 &                                                    &                    & 1D        & 0.6mm    &  15-30ms                     &                                                        &                                        &                                 &                                                                                   \\ \hline
\end{tabular}
}
\caption{Prior works on acoustic device tracking.}
\vskip -0.2in
\label{tab:acousticworks}
\end{table*}

{\it \red{Device tracking using acoustic signals}.} Table.~\ref{tab:acousticworks} shows recent work on acoustic localization and tracking. BeepBeep~\cite{peng2007beepbeep} and Swordfight~\cite{zhang2012swordfight} track 1D distances between phones but do not achieve 3D localization.  Sonoloc\cite{erdelyi2018sonoloc} realizes distributed localization and requires 10+ devices to achieve reasonable accuracies. Prior work~\cite{infocom2018} also achieves 2D tracking by assuming that there is no significant multipath. \red{ALPS~\cite{alps} and Tracko~\cite{tracko} achieve centimeter-level accuracy using a combination of Bluetooth and ultrasonic.} The closest to our work are CAT~\cite{mao2016cat} and SoundTrak~\cite{zhang2017soundtrak}. CAT achieves a median 3D error of 9mm using a combination of IMU sensor data and FMCW localization to address multipath. It requires a separation of 0.9m and 0.7m between its horizontal and vertical speaker pairs respectively. As a result, we cannot have a smartphone track the position of a VR headset, since both the devices have much smaller dimensions. SoundTrak\cite{zhang2017soundtrak} achieves an average 3D error of 1.3cm between a smart watch and a customized finger ring using phase tracking where the area of movement is limited to a $20cm\times16cm\times11cm$ space. Our work builds on this foundational work but is the first to 1) achieve sub-millimeter 1D resolution, 2)  do so without requiring large separation between microphones/speakers and 3) enable for the first time concurrent transmissions where all the acoustic devices transmit at the same time; thus allowing for high refresh rate in the presence of multiple trackers.

\red{{\it Device-free tracking using acoustic signals}.  VSkin~\cite{vskin} tracks gestures on the surface of mobile devices with a 2D accuracy of 3mm. Strata~\cite{strata}, LLAP~\cite{LLAP} and FingerIO~\cite{fingerio} track moving fingers in the proximity of a mobile device with a 2D accuracy of 1cm, 1.9cm and 1.2cm respectively. Toffee~\cite{toffee} localizes the direction of a touch around a mobile device of an angular error of 4.3\textdegree. While device-free finger tracking is challenging because of noisy measurements, it benefits from lack of  synchronization issues and relies on more strict multipath assumptions. This line of work however is complimentary to our work on acoustic device tracking. }

\section{Conclusion and Discussion}
We present \name, a novel system that pushes the limits of acoustic based motion tracking and localization.  We show for the first time how to achieve sub-mm 1D tracking and localization accuracies using acoustic signals on smartphones, in the presence of multipath. To achieve this, we introduce algorithms that use the phase of FMCW signals to disambiguate between multiple paths. We also enable multiple  smartphones to transmit concurrently using time-shifted FMCW acoustic signals and enable concurrent tracking  without sacrificing  accuracy or frame rate. 

\red{While this paper presents multiple benchmarks, user studies and evaluation in indoor and outdoor environments, more extensive evaluation is required to understand its behavior in various edge cases as well as in rooms with significant multipath that can adversely affect accuracy. Here, we discuss the limitations of our current system design. }


\red{First, we support simple occlusions such as fabric and paper, but do not support human limbs or the device itself. Additional algorithmic development is required to support these practical occlusion scenarios. } \red{Second, while our design has better drift characteristics than prior work on acoustic tracking, further work is required to make it comparable to optical based systems. One approach is to perform sensor fusion with IMU data and achieve better accuracy, \red{lower latency} and more resilience to clock drifts. This could also enable VR headset tracking while using a mobile beacon (i.e., smartphone) in the hand instead of placing it on a table. }

\red{Our current range is limited to 2~m. This is because the microphones in our array prototype are not optimized for performance and are not designed to have optimal response in the 17.5--23.5~kHz frequencies. Finally, we support upto 4--5 concurrent smartphone acoustic transmissions without affecting the frame rate per device. One way to increase the number of concurrent devices is to use longer chirps so as to support more time-shifted FMCW chirps that can be allocated to different smartphones. This however comes at the expense of the frame rate per device.}



\section{Appendix: 1D Tracking Details}
\red{Our 1D tracking algorithm has two main components.}

{{\it 1) Adaptive band-pass filter to remove distant multipath.}  For the first FMCW chirp, we extract the first peak of the demodulated signal in the frequency domain using an DFT similar to prior designs from Eq.~\ref{eq:fmcwdemod}. We then apply a \red{Finite Impulse Response (FIR)} filter that only leaves a narrow range of frequency bands around the peak. We adaptively set the delay of the FIR filter from the SNR of the acoustic signals. Specifically, when $SNR>10dB$, we use a 15ms delay; otherwise, we double the delay to $30ms$.}

{For subsequent FMCW chirps we no longer use the DFT to extract the peak frequency. Instead, for the $i+1$th FMCW chirp, we infer the new peak from the distance and speed estimated at the end of the $i$th chirp. We then apply the FIR filter around this new peak. Given the distance $d^{(i)}_{end}$ and speed $v^{(i)}_{end}$ estimated from the end of the $i$th chirp, we can infer the distance of the beginning of the current chirp  $\hat{d}^{(i+1)}_{start}=d^{(i)}_{end}+v^{(i)}_{end}\delta T$ where $\delta T$ is the gap between two chirps. We do this for two key reasons: a) our distance estimates are far more accurate than the peak of the DFT result; and b) unlike a DFT that is performed over a whole FMCW chirp, we do not require receiving a full FMCW chirp before processing, thus reducing the frame rate. }

{Finally, Doppler effects can blur the peak in the frequency domain. So, we adaptively increase the width of the pass band in the FIR filter when the speed estimate at the end of the previous chirp exceeds a given threshold. In our algorithm, we set the pass band width to $1Hz$ when the speed does not exceed $1m/s$; otherwise, we set the pass band width to $2Hz$. }

{{\it 2) Extracting distance from FMCW phase.} The above process eliminated all multipaths that have a much larger time-of-flight than the direct path. This leaves us with residual indirect paths around the direct path. \red{Thus, when there is no occlusion, the sum of the residual indirect paths has a lower amplitude than the direct path (confirmed empirically).} }

{To extract the distance from the phase value, we approximate the effect of residual multipath after filtering. From Eq.~\ref{eq:fmcwdemod}, we approximate the phase as, 
\begin{equation}
    \mathbf{\phi}(t) \approx -2\pi(\frac{B}{T}tt_d+f_0t_d-\frac{B}{2T}t_d^2)
    \label{eq:phasequadratic}
    \end{equation}
Where $t_d$ is the time of arrival of the direct path. The approximation assumes that we have already applied the dynamic filter to remove most multipath that has a much larger time-of-arrival distance than the direct path.  The above quadratic equation in $t_d(t,\phi(t))$ can be uniquely solved; the equation has two solutions but only one is in the range of the  FMCW chirp, $[0,T]$.  The distance $d(t, \phi(t))$ can then be computed as, $ct_d(t, \phi(t))$. We note the following about the distance error.}

\begin{lemma} Given the phase error bound of $sin^{-1}(\frac{A_2}{A_1})$ from Fig.~\ref{fig:errcompare}, the error in our distance estimate, $d(t)$ is upper bounded by $\frac{sin^{-1}(A_2/A_1)c}{2\pi(f_0-B/2)}$, where $f_0$ and $B$ are the FMCW parameters. 
\end{lemma}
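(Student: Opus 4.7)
The plan is to convert the hypothesized phase error bound $|\Delta \phi| \le \sin^{-1}(A_2/A_1)$ into a distance error bound by linearizing the implicit phase-to-$t_d$ map defined by Eq.~\ref{eq:phasequadratic}. Since $d = c\, t_d$, a first-order perturbation analysis will give $|\Delta d| = c\, |\Delta t_d|$, with $|\Delta t_d|$ controlled by the reciprocal of $|\partial \phi / \partial t_d|$ evaluated at the true direct-path time-of-flight.

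Concretely, I would first differentiate Eq.~\ref{eq:phasequadratic} with respect to the unknown $t_d$ to obtain
$$
\frac{\partial \phi}{\partial t_d} \;=\; -2\pi\!\left(f_0 + \frac{B}{T}(t - t_d)\right),
$$
which is exactly $-2\pi$ times the chirp's instantaneous frequency at the receiver sample index $t$, offset by $t_d$. By the implicit function theorem applied to $\phi(t,t_d)=\phi_{\text{meas}}$, a phase perturbation $\Delta \phi$ therefore produces
$$
|\Delta d| \;=\; c\,|\Delta t_d| \;\le\; \frac{c\,|\Delta \phi|}{2\pi\,\bigl(f_0 + (B/T)(t-t_d)\bigr)}.
$$
I would then lower bound the denominator uniformly over the admissible samples. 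With $t \in [0,T]$ and $t_d$ bounded by $T/2$ at the tracking ranges of interest (equivalently, treating the swept band as centered so it occupies $[f_0 - B/2,\; f_0 + B/2]$), the quantity $f_0 + (B/T)(t - t_d)$ is at least $f_0 - B/2$. Substituting this worst case together with $|\Delta\phi|\le\sin^{-1}(A_2/A_1)$ into the displayed inequality immediately yields
$$
|\Delta d| \;\le\; \frac{\sin^{-1}(A_2/A_1)\, c}{2\pi\,(f_0 - B/2)},
$$
as claimed.

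I expect the main obstacle to be justifying the uniform lower bound $f_0 - B/2$ on the denominator: this hinges on the convention used for $f_0$ (center vs.\ starting frequency of the sweep) and on bounding the admissible $t_d$ so that $t - t_d \ge -T/2$ throughout the chirp. A secondary point is verifying that $\partial \phi / \partial t_d$ does not vanish on the relevant branch of the quadratic in Eq.~\ref{eq:phasequadratic}, so that the inversion $\phi \mapsto t_d$ (and hence the linearization) is well-defined; this is guaranteed precisely because $f_0 - B/2 > 0$ for the 17.5--23.5~kHz band used by \name. Once these conventions are pinned down, the rest is a single first-order sensitivity calculation, so the substantive content of the lemma is really the identification of the minimum instantaneous frequency as the right denominator.
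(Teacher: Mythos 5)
Your proposal is correct and follows essentially the same route as the paper's proof: both differentiate the quadratic phase of Eq.~\ref{eq:phasequadratic} with respect to $t_d$, lower-bound the magnitude of that derivative by $2\pi(f_0 - B/2)$ at the worst case $t=0$, $t_d = T/2$ (using that $f_0$ is the sweep's starting frequency and the admissible delay is at most half the chirp), and divide the phase error $\sin^{-1}(A_2/A_1)$ by this minimum instantaneous frequency before multiplying by $c$. The only difference is cosmetic: where you invoke a first-order/implicit-function-theorem linearization, the paper makes the inequality exact via the tangent-line bound for the convex function $\phi(t,\cdot)$ — your uniform lower bound on $|\partial\phi/\partial t_d|$ over all admissible samples achieves the same rigor via the mean value theorem, so no gap remains.
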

\begin{proof} 
First we show that the function $\phi(t, t_d)$ in Eq.~\ref{eq:phasequadratic} is convex with respect to $t_d$, which is the time-of-arrival. This is because the first derivative is given by, 
$\frac{d\phi(t, t_d)}{dt_d}=-2\pi(\frac{B}{T}t+f_0-\frac{B}{T}t_d)<0$, when $t_d<T$. Further its second derivative
$\frac{d^2\phi(t, t_d)}{dt_d^2}=2\pi\frac{B}{T}>0$ resulting in a convex function.

Suppose an phase error $\phi_e$ would introduce a time-of-arrival error of $t_e$ because of multipath.  Without loss of generality, we assume $\phi_e>0$. we know that for any $t$ for a convex function, $\phi(t, t_d)>\phi(t, t_d+t_e)-\frac{d \phi(t, t_d+t_e)}{d t_d}t_e$.  Thus the error in the phase $\phi_e = \phi(t, t_d) - \phi(t, t_d+t_e)$ can we written as,
$
\phi_e>-\frac{d\phi(t, t_d+t_e)}{dt_d}t_e
$.
This  can be rewritten as, $
t_e<\frac{\phi_e}{-\frac{d\phi(t, t_d+t_e)}{dt_d}} =\frac{\phi_e}{2\pi(\frac{B}{T}t+f_0-\frac{B}{T}(t_d+t_e))}$.  The upper bound for this  equation occurs when $t=0$ and $t_d+t_e$ is maximum. Since the maximum delay permitted by our FMCW signal is half its duration, this occurs when $t_d+t_e=\frac{T}{2}$. First from Lemma 2.1, we know that $\phi_e < sin^{-1}(\frac{A_2}{A_1})$. Thus the above equation is upper bounded as:
$t_e < \frac{sin^{-1}(\frac{A_2}{A_1})}{2\pi(f_0-\frac{B}{2})}$. Thus,  
$d_e^{(phase)}<\frac{sin^{-1}(\frac{A_2}{A_1})c}{2\pi(f_0-\frac{B}{2})}$.

\end{proof}

\bibliographystyle{acm}
\bibliography{main} 
\balance
\end{document}